\documentclass[10pt]{article}
\usepackage{caption}

\usepackage{cite}

\usepackage{subcaption}
\usepackage[utf8]{inputenc}

\usepackage{amssymb}
\usepackage{amsfonts}
\usepackage{amsmath}
\usepackage{graphicx}
\usepackage{indentfirst}
\usepackage{dsfont}

\usepackage{mathtools}

\usepackage{enumerate}
\usepackage{url}
\usepackage{lmodern}
\usepackage{newpxtext}

\usepackage{array}
\usepackage{multirow}
\usepackage{longtable}

\usepackage{amsthm}

\newtheorem{definition}{Definition}
\newtheorem{theorem}{Theorem}
\newtheorem{corollary}[theorem]{Corollary}
\newtheorem{proposition}[theorem]{Proposition}
\newtheorem{lemma}[theorem]{Lemma}

\usepackage[symbol]{footmisc}

\usepackage{tikz}
\definecolor{darkblue}{rgb}{0.15,0.35,0.55}
\definecolor{reddish}{rgb}{.8, 0.2, 0.2}
\usepackage[pdfpagelabels, linktocpage=true]{hyperref}
\hypersetup{
colorlinks=true,
citecolor=darkblue,
linkcolor=reddish,
urlcolor=darkblue,
pdfauthor={},
pdftitle={},
pdfsubject={}
}

\oddsidemargin 0 cm
\evensidemargin 0 cm
\topmargin -1.5 cm \textheight 23 cm \textwidth 16.5 cm
\raggedbottom


\newcommand{\EX}{\mathbf{E}}


\long\def\ca#1\cb{} 

\newcommand{\becs}{\begin{cases}}
\newcommand{\bem}{\begin{matrix}}

\newcommand{\bsk}{\bigskip }



\newcommand{\dyad}[2]{|#1\rangle\langle#2|}

\newcommand{\encs}{\end{cases}}
\newcommand{\enm}{\end{matrix}}

\newcommand{\ket}[1]{|#1\rangle }

 

\newcommand{\mat}[1]{\left(\begin{matrix}#1\end{matrix}\right)}



\newcommand{\ot}{\otimes }





\newcommand{\Tr}{{\rm Tr}}



\newcommand{\AC}{{\mathcal A}}
\newcommand{\BC}{{\mathcal B}}

\newcommand{\DC}{{\mathcal D}}
\newcommand{\EC}{{\mathcal E}}

\newcommand{\HC}{{\mathcal H}}
\newcommand{\IC}{{\mathcal I}}

\newcommand{\LC}{{\mathcal L}}

\newcommand{\NC}{{\mathcal N}}

\newcommand{\XC}{{\mathcal X}}
\newcommand{\YC}{{\mathcal Y}}




\newcommand{\rB}{\textbf{r}}



\newcommand{\al}{\alpha }

\newcommand{\gm}{\gamma }

\newcommand{\Dl}{\Delta }


\newcommand{\Lm}{\Lambda }

\newcommand{\sg}{\sigma }

\DeclareMathOperator{\tr}{tr}

 
\def\outl#1{\par{\medskip\noindent\hspace*{0.1cm}\bf
      \mathversion{bold}#1\mathversion{normal}\smallskip} }
   \def\xa{} \def\xb{}  

 \def\outl#1{}\def\xa{}\def\xb{}

\ca
 \def\outl#1{\par{\medskip\noindent\hspace*{.5cm}\bf
      \mathversion{bold}#1\mathversion{normal}\smallskip} }
 \long\def\xa#1\xb{} 
  
\cb

\title{Queue-Channel Capacities with Generalized Amplitude Damping}

\author{Vikesh Siddhu, Avhishek Chatterjee, 
Krishna Jagannathan, Prabha Mandayam, Sridhar Tayur}
\begin{document}
\begin{center}
{\bf Queue-Channel Capacities with Generalized Amplitude Damping}\bsk\\
    \normalsize Vikesh Siddhu~$^1$,
    Avhishek Chatterjee~$^2$, Krishna Jagannathan~$^2$, Prabha Mandayam~$^3$, and Sridhar
    Tayur~$^4$\\
    \textit{$^1$ JILA, University of Colorado/NIST, 440 UCB, Boulder, CO 80309, USA}\\
    \textit{$^2$ Department of Electrical Engineering, Indian Institute of Technology Madras, Chennai 600036, India}\\
    \textit{$^3$ Department of Physics, Indian Institute of Technology Madras, Chennai 600036, India}\\
    \textit{$^4$ Quantum Computing Group, Tepper School of Business, Carnegie Mellon University, Pittsburgh PA 15213, USA}\\
Date: 28 Jul 2021
\vspace{.1cm}

\end{center}

\begin{abstract}
The generalized amplitude damping channel (GADC) is considered an important model for quantum communications, especially over optical networks. We make two salient contributions in this paper apropos of this channel. First, we consider a symmetric GAD channel characterized by the parameter $n=1/2,$ and derive its exact classical capacity, by constructing a specific induced classical channel. We show that the Holevo quantity for the GAD channel equals the Shannon capacity of the induced binary symmetric channel, establishing at once the capacity result and that the GAD channel capacity can be achieved without the use of entanglement at the encoder or joint measurements at the decoder. Second, motivated by the inevitable buffering of qubits in quantum networks, we consider a generalized amplitude damping \emph{queue-channel} ---that is, a setting where qubits suffer a waiting time dependent GAD noise as they wait in a buffer to be transmitted. This GAD queue channel is characterized by non-i.i.d. noise due to correlated waiting times of consecutive qubits. We exploit a conditional independence property in conjunction with additivity of the channel model, to obtain a capacity expression for the GAD queue channel in terms of the stationary waiting time in the queue. Our results provide useful insights towards designing practical quantum communication networks, and highlight the need to explicitly model the impact of buffering.
\end{abstract}

\section{Introduction}
\label{sec:intro}

There is considerable and growing interest in designing and setting up
large-scale quantum communication networks \cite{wehner2020}. To that end,
understanding the fundamental capacity limits of quantum communications in the
presence of noise is of practical importance. In this context, the inevitable
buffering of quantum states during communication tasks acts as an additional
source of decoherence. One concrete example of such buffering occurs at
intermediate nodes or quantum repeaters, where quantum states have to be stored
for a certain \emph{waiting time} until they are processed and transmitted
again~\cite{nemoto2016}. Indeed, while quantum states wait in buffer for
transmission, they continue to interact with the environment, and suffer a
\emph{waiting time dependent} decoherence~\cite{repeater2018,
repeater_waitingtime20}. In fact, the longer a qubit waits in a buffer, the
more it decoheres.

To characterise the impact of buffering on quantum communication, researchers
have recently begun to combine queuing models with quantum noise models
\cite{MandayamJagannathanEA20}. In particular, the buffering process inherently
introduces correlations across the noise process experienced by consecutive
qubits, since the waiting times are correlated according to the queuing
dynamics. Thus, to properly characterise the decoherence introduced due to
buffering, we need to look `beyond i.i.d' quantum channels and noise models. 

Although the buffering process leads to correlated noise, it is known to have a
\emph{conditional independence} structure, given the sequence of waiting times
of the qubits. This conditional independence structure can be exploited for
\emph{additive} channels to compute capacity for the correlated noise model, if
the corresponding i.i.d. noise model is well understood in terms of capacity;
see \cite{MandayamJagannathanEA20}.

The generalized amplitude damping channel (GADC) has emerged as an important
model of noise for quantum communication~\cite{shapiro2009, repeater2018}. Even
for the i.i.d case of the  well-studied GADC (see \cite{KhatriSharmaEA20}, for
example), several fundamental questions remain unsolved. For instance, (a) can
the classical capacity of the channel be achieved without entanglement, and (b)
if so, can one construct an explicit encoding-decoding scheme that achieves
capacity? 
 
These questions are well-motivated regardless of any buffering considerations.
Indeed, it is well known that entanglement can be exploited at the encoder and
the decoder for achieving the classical capacity of a quantum channel. For the
class of \emph{additive} channels, the classical capacity can be achieved
without using entanglement at the encoder, although the decoding could involve
joint measurements at the receiver. Performing such joint measurements
typically requires a quantum processor that can carry out quantum gate
operations in a high dimensional Hilbert space.  Since the availability of such
a reliable quantum processor at a communication receiver may not be realistic
in the near future, it is practically relevant to ask after the best achievable
rate without the use of entangled encoding and joint measurements, as well as
the corresponding encoding-decoding.
 
Thus motivated, we make the following contributions in this paper.

\subsection{Our Contributions:}

The GADC is typically parametrized by two quantities, $n$ and $p$. Recent work
has characterized the classical capacity of this channel, for certain parameter
ranges \cite{KhatriSharmaEA20}. In particular, the Holevo information for this
channel has been characterized, which is equal to its classical capacity for
certain parameter values where the channel is known to be additive. 

In the present paper, we first consider a symmetric i.i.d. GADC with $n=1/2,$
and derive the classical capacity of this channel. We do this through the
explicit construction of a symbol-by-symbol encoding at the transmitter and
qubit-by-qubit POVM at the decoder. Specifically, we show that the classical
capacity of a symmetric GADC is achieved without entanglement. For asymmetric
GADC, we characterize the loss in capacity due to non-entangled decoding. To
the best of our knowledge, such  results for GADC have so far been unknown. 

\begin{figure}
    \centering
    \includegraphics[width=16 cm, height = 3.8 cm]{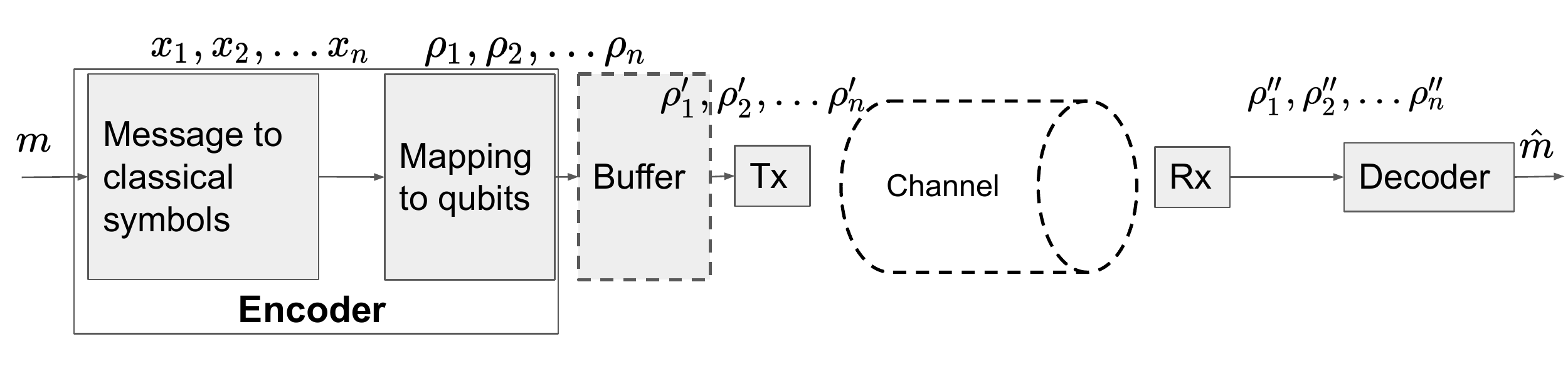}
    \caption{Qubit $\rho_i$ decoheres to $\rho_i'$ while waiting in the buffer
    for transmission. This further decoheres to $\rho''_i$ while passing
    through the channel. Decoherence in the buffer depends on the waiting time
    and results in non-i.i.d. "effective" decoherence.}
    \label{fig:schematic}
\end{figure}

Next, we consider the setting in Fig.~\ref{fig:schematic}, where qubits are
transmitted sequentially, and the qubits decohere as they wait to be
transmitted. The extent of noise suffered by a particular qubit is a function
of the waiting time spent by that qubit. In such a setting, the  "effective"
channel experienced by the qubits is non-stationary and has memory. We model
this waiting time dependent noise using the \emph{quantum queue-channel}
framework, studied in \cite{MandayamJagannathanEA20}. Specifically, we study a
symmetric GAD queue-channel with $n=1/2,$ and the parameter $p$ is made an
explicit function of the waiting time $w$ of each qubit. Such a symmetric GAD
queue channel is known to be additive, which enables the use of the capacity
upper bound obtained in \cite{MandayamJagannathanEA20} for additive queue
channels. Further, we propose a specific encoding for the GAD queue channel,
which induces a binary symmetric classical queue channel. We show that an
achievable rate of this binary symmetric queue channel matches the upper bound
enforced by additivity arguments, thus settling the capacity of the GAD queue
channel, and giving us a fully classical capacity achieving scheme for the
encoder and decoder. Finally, we obtain useful insights for designing practical
quantum communication systems by employing queuing theoretic analysis on the
queue-channel capacity results.

The paper is organized as follows. In the Sec.~\ref{SrelWork} we discuss
related work. To keep this discussion somewhat self-contained, in
Sec.~\ref{SPrelim}, we provide an extended discussion of induced channels,
classical capacities of quantum channels, and non-i.i.d queue-channel
capacities.  In Sec.~\ref{SAmpDamp}, we analyze the generalized amplitude
damping channel~(GADC). Here we discuss the capacity of various induced
channels of GADC~(see Figs.~\ref{figsInd} and~\ref{Fig_ICHolevo}), and prove a
key result~(see Theorem~\ref{thm:CapEqInd}) that Shannon capacity, Holevo
capacity, and the classical capacity of the symmetric~($n=1/2$) GADC are all
equal.  In Sec.~\ref{SqueueCap} we discuss the queue-channel capacity of the
symmetric GADC.  We offer useful design insights by analyzing and numerically
plotting~(see Fig.~\ref{fig:LambdaVsCap}) the capacity expression.
Sec.~\ref{sec:cncl} contains a brief discussion and outlines potentially
interesting future directions.

\subsection{Related Work}
\label{SrelWork}

Our work interleaves different aspects of quantum communication networks, from
quantum Shannon theory to queuing theory.
In quantum Shannon theory, one studies ultimate limits for transmitting
information in the presence of quantum noise.  The generalized amplitude
damping channel~(GADC) is a relevant model of noise in a variety of physical
contexts including communication over optical fibers or free
space~\cite{YuenShapiro78, Shapiro09, ZouLiEA17,RozpedekGoodenoughEA18}, $T_1$
relaxation due to coupling of spins with a high temperature
environment~\cite{ChuangNielsen97, MyattKingEA00,TurchetteMyattEA00}, and
super-conducting based quantum computing~\cite{ChirolliBurkard08}. Quantum
capacities of the i.i.d. GADC have been studied~(see~\cite{KhatriSharmaEA20} and
reference therein). Of particular interest to us are expressions for the Holevo
information of the GADC, found in~\cite{LiZhenMaoFa07} using techniques
from~\cite{Cortese02, Berry05}, and channel parameters~\cite{KhatriSharmaEA20}
where additivity of the GADC Holevo information is known.

While the primary focus of quantum Shannon theory~\cite{Wilde17} has been to
study the classical and quantum capacities of stationary, memoryless quantum
channels~\cite{Holevo12}, recently there has been a spurt of activity in
characterizing the capacities of quantum channels in non-stationary, correlated
settings. We refer to~\cite{RMP2014} for a recent review of the different
capacity results obtained in a context of quantum channels that are not
independent or identical across channel uses. In particular, we focus on the
quantum information-spectrum approach in~\cite{HayashiNagaoka03}, which
provides bounds on the classical capacity of a general, non-i.i.d. sequence of
quantum channels. 

The idea of a quantum queue-channel was originally proposed
in~\cite{qubits_ncc2019} as a way to model and study the effect of decoherence
due to buffering or queuing in quantum information processing tasks. The
classical capacity of quantum queue-channels has been studied for certain
classes of quantum channels, and a general upper bound is known for additive
quantum queue-channels~\cite{MandayamJagannathanEA20}. The effect of
queuing-dependent errors on classical channels has been studied
earlier~\cite{ChatterjeeSeoEA17}, with motivation drawn from crowd-sourcing.
More recently, a dynamic programming based framework for characterising the
queuing delay of quantum data with finite memory size has been proposed
in~\cite{qdelay_2020}. Finally, we note that ideas of queuing theory have also
been used to study aspects of entanglement distribution over quantum networks
such as routing~\cite{EntRouting_2019}, switching, and
buffering~\cite{guha2021}.

\section{Preliminaries}
\label{SPrelim}
\subsection{Classical Channels}
\label{CChan}
\xb
\outl{Classical Channel, i.i.d. noise, transition matrix, channel mutual
information, Shannon entropy, additivity of channel mutual information and
channel capacity} 
\xa

Consider a random variable $X$ that takes discrete values $x$ from a finite set
$\XC$ and another random variable $Y$ that takes discrete values $y$ from some
finite set $\YC$. A {\em discrete memoryless classical channel} $N$ takes an
input symbol $x$ to an output $y$ with conditional probability $p(y|x):=
\Pr(Y=y|X=x)$.  
Sometimes it is convenient to represent a discrete memoryless classical channel
$N$ as a {\em transition matrix} $M$ where $[M]_{yx}$ is $p(y|x)$. 
The channel $N$ is called memoryless because the probability distribution of
the channel output $y$ depends on its current input $x$ and is
conditionally {\em independent} of previous channel inputs. Noise is modelled
by a channel mapping its inputs to a noisy output. When noise on several
different inputs acts in such a way that noise on each input is described by
the same discrete memoryless channel, one says the noise is {\em independent
and identically distributed~(i.i.d.)}. In what follows we discuss rates for
sending information in the presence of i.i.d. noise described by a discrete
memoryless channel $N$.

We define the mutual the {\em channel mutual information},
\begin{equation}
    \IC^{(1)}(N) = \max_{p(x)} I(Y;X),
    \label{chanMutualI}
\end{equation}
where $p(x):=\Pr(X=x)$ is a probability distribution over the input of $N$, and
$I(Y;X)$ is the mutual information given by
\begin{equation}
    I(Y;X) := H(X) + H(Y) - H(X,Y),
\end{equation}
where $H(X) = -\sum_x p(x) \log_2 p(x)$ is the Shannon entropy of the random
variable $X$, and $H(X,Y)$ is the Shannon entropy of $Z=(X,Y)$.  Channel mutual
information~\eqref{chanMutualI} represents an achievable rate for sending
information across $N$. Since $I(Y;X)$ is concave in $p(x)$ for fixed $p(y|x)$,
$\IC^{(1)}(N)$ can be computed numerically with relative
ease~\cite{Arimoto72,Blahut72}. The channel mutual information is additive:
$\IC^{(1)}$ of a channel $N \times N'$ formed by using two channels $N$ and
$N'$ together~(sometimes called ``used in parallel'') is the sum of $\IC^{(1)}$
of the individual channels; that is,
\begin{equation}
    \IC^{(1)}(N \times N') = \IC^{(1)}(N) + \IC^{(1)}(N').
    \label{chanMIAdd}
\end{equation}
Due to additivity, a classical channel $N$'s Shannon capacity,
\begin{equation}
    C_{\text{Shan}}(N) = \lim_{k \mapsto \infty} \frac{1}{k}\IC^{(1)}(N^{\times k}),
    \label{ch1_ChanCap}
\end{equation}
where $N^{\times k}$ represents $k$ parallel uses of the channel $N$,
simplifies to
\begin{equation}
    C_{\text{Shan}}(N) = \IC^{(1)}(N).
    \label{ch1_ChanCap2}
\end{equation}

\subsection{Induced Channels and Classical Quantum Channels}
\xb
\outl{Outline: Induced classical channel, rates for (1) product
encoding-decoding, (2) product encoding and joint decoding. Quantum channel and
rates for each encoding.}
\xa

We now discuss the transmission of classical information using quantum states.
We restrict ourselves to quantum systems described by some finite dimensional
Hilbert space $\HC$. Both pure and mixed states on these quantum systems can be
described using unit-trace positive semi-definite operators~(density operators)
that belong to $\LC(\HC)$, the space of bounded linear operator on $\HC$. 
Suppose classical information is encoded in quantum states using a {\em fixed
map} $\EC : \XC \mapsto \LC(\HC)$, sometimes called a {\em classical quantum}
channel which takes $x \in \XC$ to a density operator $\rho(x) \in \LC(\HC)$.
This classical information can be decoded by a map $\DC : \LC(\HC) \mapsto \YC$
which represents measuring $\rho(x)$ to obtain an output $y \in \YC$.  This
measurement can be described using a POVM~(a collection of positive operators
in $\LC(\HC)$ that sum to the identity). Suppose the POVM $\{\Lm(y)\}$
specifies $\DC$; then any input $x \in \XC$ is decoded as $y$ with conditional
probability,
\begin{equation}
    p(y|x) = \Pr(Y = y | X = x) = \Tr\big( \Lm(y) \rho(x) \big).
    \label{cond1}
\end{equation}
This conditional probability defines an {\em induced channel} $N : \XC \mapsto \YC$.
Induced channels play a vital role in defining the capacity of quantum systems
to send classical information.
For a fixed encoding $\EC$, when using a decoding $\DC$, the induced channel
capacity $C_{\text{Shan}}(N)$ represents a rate at which classical information
can be sent using $\EC$. The maximum rate of this type, sometimes called the
Shannon capacity of $\EC$,
\begin{equation}
    C_{\text{Shan}}(\EC) = 
    \max_{\DC} \IC^{(1)}(\NC) = 
    \max_{\DC, p(x)} I (Y;X),
\end{equation}
is obtained from maximizing the induced channel capacity over all possible
induced channels, i.e., over all possible decoding $\DC$.  Not much is known
about how to perform this optimization. For a fixed $\DC$---that is, fixed
$N$--- $\IC^{(1)}(\NC)$ can be computed with relative ease~(see comments
below~\eqref{chanMutualI}). However, for a fixed $p(x)$ and output alphabet
$\YC$, $I(Y;X)$ is convex in $p(y|x)$, and $p(y|x)$ is linear in the decoding
POVM $\{\Lm(y_j)\}$ specifying $\DC$. The resulting convexity of $I(Y;X)$ in
$\DC$, for fixed $\YC$ makes it non-trivial to numerically compute
$C_{\text{Shan}}(\EC)$.

A $k$-letter message $x^k = (x_1, x_2, \dots, x_k)$ is encoded via $\EC:\XC
\mapsto \LC(\HC)$, sometimes called a {\em product encoding}, into a product
state $\rho_{k}:= \rho(x_1) \ot \rho(x_2) \ot \dots \ot \rho(x_k)$ and decoded
via $\DC : \LC(\HC) \mapsto \YC$, sometimes called a {\em product decoding},
using a product measurement on $\LC(\HC^{\ot k})$. Product decoding is a
special case of joint decoding $\DC_k : \LC(\HC^{\ot k}) \mapsto \YC^{\times
k}$ performed using a joint measurement POVM on $\LC(\HC^{\ot k})$ resulting in
an element on $k$ copies of some classical alphabet $\YC^{\times k}$.  Encoding
$\EC$ followed by joint decoding $\DC_k$ results in an induced channel $N_k$.
Maximizing the channel mutual information $\IC(N_k)$ over all decodings $\DC_k$
defines $\IC^{(k)}(\EC)$.
Due to the presence of entanglement in the joint decoding measurements, one may
have $\IC^{(k)}(\EC) \geq k\IC^{(1)}(\EC)$. This inequality refers to the {\em
super-additivity} of $\IC^{(k)}(\EC)$. Due to super-additivity, a proper
definition of the capacity $C_{pj}$ of sending classical information using {\em
p}roduct encoding $\EC$ and {\em j}oint decoding $\DC_k$ is given by a {\em
multi-letter} formula,
\begin{equation}
    C_{pj} = \lim_{k \mapsto \infty} \frac{1}{n}\IC^{(k)}(\EC).
    \label{HolevoChiOp}
\end{equation}
Remarkably, the Holevo-Schumacher-Westmoreland theorem~\cite{Holevo98,
SchumacherWestmoreland97} gives the above multi-letter expression a {\em
single-letter} form; that is,
\begin{equation}
    C_{pj}(\EC) = \chi^{(1)}(\EC) := \max_{\{p(x)\}} \chi \big( p(x), \rho(x) \big),
    \label{HolevoChi1}
\end{equation}
where the Holevo quantity,
\begin{equation}
    \chi\big( p(x), \rho(x) \big) = S\big(\sum_{x \in \XC} p(x) \rho(x) \big) 
    - \sum_{x} p(x) S\big(\rho(x))\big),
    \label{HolevoChi}
\end{equation}
and $S(\rho) = -\Tr(\rho \log \rho)$, is the von-Neumann entropy of a density
operator $\rho$. Due to the close connection between
$C_{pj}$ and $\chi$, sometimes $C_{pj}(\EC)$ is also denoted by
$C_{\chi}(\EC)$.  There are cases where $C_{\chi}(\EC)$ is strictly greater
than $C_{\text{Shan}}(\EC)$~\cite{SasakiBarnettEA99, Shor04}.  However, much
remains unknown about when and how such separations occurs. 

\subsection{Classical Capacities of a Quantum Channel}
\xb
\outl{Product state capacity of a quantum channel~(product encoding and
joint decoding) and Holevo Quantity. Classical capacity of a quantum
channel~(Joint encoding and decoding), non-additivity of Holevo quantity and
Classical channel capacity expression}
\xa

The quantum analog of a discrete memoryless classical channel is a
(noisy)~quantum channel $\BC$. In general, if $\HC_a$ and $\HC_b$ are two
finite dimensional Hilbert spaces, then the quantum channel $\BC : \LC(\HC_a)
\mapsto \LC(\HC_b)$ is a completely positive trace preserving~(CPTP) map. In
what follows, we discuss transmission of classical information using quantum
states affected by a quantum channel $\BC$~\cite{BennettShor98}~(also see Ch.8
in~\cite{Holevo12}). 

Sending classical information across $\BC$ using product encoding $\EC: \XC
\mapsto \LC(\HC_a)$ and product decoding $\DC: \LC(\HC_b) \mapsto \YC$
results in an induced channel $N: \XC \mapsto \YC$. Maximizing the channel mutual
information of $N$ over $\EC$ and $\DC$ gives the product encoding-decoding
capacity $C_{pp}(\BC)$, also known as the {\em Shannon capacity} of $\BC$,
\begin{equation}
    C_{Shan}(\BC) = \max_{\EC, \DC} \IC(N) = \max_{ \{\EC, \DC \} } \max_{p(x)} I(Y;X).
    \label{ShannonCap}
\end{equation}
The Shannon capacity is bounded from above by the product encoding joint
decoding capacity $C_{pj}(\BC)$, sometimes called the Holevo capacity or the
product state capacity. Using a procedure similar to the one
above~\eqref{HolevoChiOp}, the capacity $C_{pj}(\BC)$ can be defined with the
aid of induced channels generated from product encoding but joint decoding.
Such a definition results in multi-letter expression of the
type~\eqref{HolevoChi1} with a single-letter characterization,
\begin{equation}
    C_{pj}(\BC)= \chi^{(1)}(\BC) := \max_{\{\rho_a(x),p(x)\}} \chi \big(p(x), \rho_b(x) \big),
    \label{HolevoBChi}
\end{equation}
where $\rho_b(x) = \BC(\rho_a(x))$.

The product state capacity $\chi^{(1)}(\BC)$ can be further generalized to
include the possibility of using joint encoding $\EC_k:\XC^{\times k} \mapsto
\LC(\HC_a^{\ot k})$ at the channel input and joint decoding $\DC_k:
\LC(\HC^{\ot k}_b) \mapsto \YC^{\times k}$ at the output. This
encoding-decoding results in an induced channel $\tilde N_k: X^{\times k}
\mapsto Y^{\times k}$.  Maximizing the mutual information of this induced
channel over all encoding $\EC_k$ and decoding $\DC_k$ gives $\tilde
\IC^{k}(\BC)$. Due to the presence of entanglement at the encoding, one may
have super-additivity of the form $\tilde \IC^{k}(\BC) \geq k\tilde \IC(\BC)$.
Due to this super-additivity, the joint encoding-decoding capacity $C_{jj}(\BC)$,
sometimes called the {\em classical capacity} of $\BC$, is defined by a
multi-letter expression of the form~\eqref{HolevoChiOp}.  The capacity
$C_{jj}(\BC)$ can be characterized using the product state
capacity~\eqref{HolevoBChi} as follows,
\begin{equation}
   C_{jj}(\BC) = \chi(\BC) := \underset{k \mapsto \infty}{\lim} \frac{1}{k} \chi^{(1)}(\BC^{\ot k}).
   \label{ch1_HolCap}  
\end{equation}
In general, the limit in~\eqref{ch1_HolCap} is required because the product
state capacity can be non-additive~\cite{Hastings09}; that is, for any two
quantum channels $\BC$ and $\BC'$, the inequality,
\begin{equation}
   \chi^{(1)}(\BC \ot \BC') \geq  \chi^{(1)}(\BC) + \chi^{(1)}(\BC'), 
   \label{ch1_HolNon}  
\end{equation}
can be strict. For certain special classes of channels, the Holevo information
is known to be additive; that is, the inequality above becomes an equality when
$\BC$ is any channel and $\BC'$ belongs to a special class of channels. These
special classes are unital qubit channels~\cite{King02}, depolarizing
channels~\cite{King03}, Hadamard channels~\cite{KingMatsumotoEA07}, and
entanglement breaking channels~\cite{Shor02}.

\subsection{Classical capacity of non-i.i.d. quantum channels}
As mentioned in Sec.~\ref{sec:intro}, the effective channel seen by qubits in
the presence of decoherence  in the transmission buffer is non-i.i.d.
Characterizing the capacity is a harder problem in such a setting. In the
classical setting, a capacity formula for this general non-i.i.d. setting was
obtained using the information-spectrum method~\cite{Han03, VerduHan94}. This
technique was adapted to the quantum setting in~\cite{HayashiNagaoka03}, and a
general capacity formula was obtained for the classical capacity of a quantum
channel. 

\subsubsection{The Quantum inf-information rate}
Recall that a quantum channel is defined as a completely positive,
trace-preserving map $\BC : \HC_a \mapsto \HC_b$ from the "input" Hilbert space
$\HC_a$ to the "output" Hilbert space $\HC_b$. Consider a sequence of quantum
channels $\vec{\mathcal{N}} \equiv \{\mathcal{N}^{(n)}\}_{n=1}^{\infty}$. Let
$\vec{P}$ denote the totality of sequences $\{P^n( X^n ) \}_{n=1}^{\infty}$ of
probability distributions (with finite support) over input sequences $X^n$, and
$\vec{\rho}$ denote the sequences of states $\rho_{X^n}$ corresponding to the
encoding $X^n \rightarrow \rho_{X^n}$. For any $a \in \mathbb{R}^{+}$ and $n$,
we define the operator, \[ \Gamma_{\{P^{n}(X^{n}), \rho_{X^{n}}\}}(a) =
\mathcal{N}^{(n)} (\rho_{X^{n}})- e^{an}\sum_{X^{n}\in
\mathcal{X}^{(n)}}P^{n}(X^{n})\mathcal{N}^{(n)}(\rho_{X^{n}}) . \] Further, let
$\{  \Gamma_{\{P^{n}(X^{n}), \rho_{X^{n}}\}}(a)  > 0\}$ denote the projector
onto the positive eigenspace of the operator $\Gamma_{\{P^{n}(X^{n}),
\rho_{X^{n}}\}}(a) $. 

\begin{definition}
The quantum inf-information rate~\cite{HayashiNagaoka03}
    $\underline{\mathbf{I}}( \, \{ \vec{P}, \vec{\rho} \, \}, \vec{\mathcal{N}}
    \, )$  is defined as,
\begin{equation}
	\underline{\mathbf{I}}( \, \{ \vec{P}, \vec{\rho} \, \}, \vec{\mathcal{N}} \, ) = 
	\sup \left\lbrace a\in\mathbb R^+\left\vert \lim_{n\rightarrow \infty} \sum_{X^{n}\in \mathcal{X}^{(n)}} P^{n}(X^{n}) \tr \left[ \mathcal{N}^{(n)} (\rho_{X^{n}}) \left\lbrace \Gamma_{\{P^{n}(X^{n}), \rho_{X^{n}}\}}(a)   > 0 \right.\right\rbrace \right] = 1 \right\rbrace .\label{eq:quantum_I}
	\end{equation}
\end{definition}

This is the quantum analogue of the classical inf-information rate originally
defined in~\cite{Han03, VerduHan94}. The central result
of~\cite{HayashiNagaoka03} is to show that the classical capacity of the
channel sequence $\vec{\mathcal{N}}$ is given by \[C = \sup_{\{\vec{P},
\vec{\rho}\}}\underline{\mathbf{I}} (\{\vec{P},\vec{\rho}\},\vec{\mathcal{N}}).
\]

\section{Generalized Qubit Amplitude Damping}
\label{SAmpDamp}

The generalized qubit amplitude damping channel~(GADC) $\AC_{p,n} : \LC(\HC_a)
\mapsto \LC(\HC_b)$ is a two parameter family of channels where the parameters
$p$ and $n$ are between zero and one. The channel has a qubit input and qubit
output---$d_a = d_b = 2$--- and its superoperator has the form,
\begin{equation}
    \AC_{p,n}(\rho) = \sum_{i=0}^3 K_i \rho K_i^{\dag},
    \label{ampDampGlm}
\end{equation}
where
\begin{align}
    K_0 &=   \sqrt{1-n}(\dyad{0}{0} + \sqrt{1-p}\dyad{1}{1}), 
    &K_1 &=   \sqrt{p(1-n)}\dyad{0}{1}, \\
    K_2 &=   \sqrt{n}(\sqrt{1-p}\dyad{0}{0} + \dyad{1}{1}), \quad \text{and}
    &K_3 &=   \sqrt{pn}\dyad{1}{0}
    \label{ampDampGlmKr}
\end{align}
are Kraus operators. The GADC~\eqref{ampDampGlm} can also be expressed as
\begin{equation}
    \AC_{p,n} = (1-n)\AC_{p,0} + n \AC_{p,1}.
    \label{ampDampGlmcv}
\end{equation}
The above representation provides the following insightful interpretation.  The
parameter $n$ represents the mixing of $\AC_{p,0}$ with $\AC_{p,1}$, where each
channel $\AC_{p,i}$~($i = 0$ or $1$) is an amplitude damping channel that
favours the state $[i]$ by keeping it fixed and maps the orthogonal state
$[1-i]$ to $[i]$ with damping probability $p$.
When $n = 1/2$, we get equal mixing of both $\AC_{p,0}$ and $\AC_{p,1}$. This
equal mixing represents noise where each state $[i]$~($i = 0,1$) is mapped to
itself with probability $p$ and to $[1-i]$ with probability $1-p$; in other words, this
$n=1/2$ noise treats both $[0]$ and $[1]$ identically. However, when $n$ is not
half, the action of $\AC_{p,n}$ on $[0]$ is different from its action on $[1]$.
In particular, $[0]$ is mapped to itself with probability $1-pn$ and to $[1]$
with probability $pn$, and $[1]$ is mapped to itself with probability $1-p(1-n)$
and to $[0]$ with probability $p(1-n)$.

Any qubit density operator can be written in the Bloch parametrization,
\begin{equation}
    \rho(\rB) = \frac{1}{2} (I + \rB. \vec{\sg}) 
    := \frac{1}{2}(I + x \sg^x + y \sg^y + z \sg^z),
    \label{BlochRep}
\end{equation}
where the Bloch vector $\rB = (x,y,z)$ has norm $|\rB| \leq 1$,
\begin{equation}
    \sg^x = \mat{0 & 1 \\ 1 & 0}, \quad 
    \sg^y = \mat{0 & -i \\ i & 0}, \quad \text{and} \quad
    \sg^z = \mat{1 & 0 \\ 0 & -1}
\end{equation}
are the Pauli matrices, written in the standard basis $\{\ket{0}, \ket{1}\}$.
Using the Bloch parametrization, the entropy 
\begin{equation}
    S\big( \rho(\rB) \big) = h\big( (1 - |\rB|)/2 \big),
    \label{qBitEntro}
\end{equation}
where $h(x):= -[x \log x + (1-x) \log (1-x)]$ is the binary entropy function
and $|\rB| = \sqrt{\rB.\rB}$ is the norm of $\rB$.
An input density operator $\rho(\rB)$ is mapped by $\AC_{p,n}$ to an output
density operator with Bloch vector,
\begin{equation}
    \rB' = (\sqrt{1-p}x, \sqrt{1-p}y, (1-p)z + p(1-2n) ).
    \label{ampGlnBlochVec}
\end{equation}
The GADC is unital at $n=1/2$; that is, $\AC_{p,\frac{1}{2}}(I) = I$. The GADC is
entanglement breaking~\cite{KhatriSharmaEA20} when
\begin{equation}
    2(\sqrt{2} - 1) \leq p \leq 1  \quad \text{and} \quad
    \frac{1}{2}(1 - l(p)) \leq n \leq \frac{1}{2}(1 + l(p)),
    \label{GADCebt}
\end{equation}
where $l(p) = \sqrt{\frac{p^2 + 4p -4}{p^2}}$. The Holevo capacity of unital
qubit channels and entanglement breaking channels is additive; as a result, when
$n=1/2$ and when the values of parameters $p$ and $n$ satisfy~\eqref{GADCebt},
the Holevo information of the generalized amplitude damping channel equals the
classical capacity of the channel. For other values of $p$ and $n$, the
classical capacity of the GADC is not known because for these parameter values,
the Holevo information of the channel is not known to be additive or
non-additive. The actual value of the Holevo information can be computed
numerically.  Next, we briefly discuss this numerical calculation.

\subsection{Holevo Information}
\label{SGADCHolevo}
Let $[\al_+]$ and $[\al_-]$ be projectors on states with Bloch vector
\begin{equation}
    \rB_+ = (\sqrt{1-z^2},0,z), \quad \text{and} \quad 
    \rB_- = (-\sqrt{1-z^2},0,z),
    \label{ensGAPD}
\end{equation}
respectively; here $-1 \leq z \leq 1$. Notice, $[\al_+]$ and $[\al_-]$ are not
orthogonal, except when $z = 0$. It has been shown~\cite{LiZhenMaoFa07} that the
Holevo information,
\begin{equation}
    \chi^{(1)}(\AC_{p,n}) = \max_{\{-1 \leq z \leq 1\}} \; 
    S\big(\AC_{p,n}(\sg)\big) - [S\big(\AC_{p,n}([\al_+])\big) +
    S\big(\AC_{p,n}([\al_-])\big)]/2 ,
    \label{ch5_HolChi}  
\end{equation}
where $\sg = ([\al_+] + [\al_-])/2$. In the above equation, the optimizing
$z$ has the value
\begin{equation}
    z^* =\frac {u - p(1-2n)}{1-p},
    \label{ch5_optZGADC}
\end{equation}
where $u$ comes from solving,
\begin{equation}
    \big(pu - p^2 (1-2n) - p(1-p)(1-2n) \big) f'(r^*) = -r^*(1-\gm) f'(u),
\end{equation}
with
\begin{align}
    f(x)  &:= (1 + x) \log_2 (1+x) + (1-x) \log_2 (1-x), \\
    f'(x) &=  \log_2 \Big( \frac{1+x}{1-x}\Big), \quad \text{and} \\ 
    r^*   &:= \sqrt{1 - p - \frac{\big(u - p(1-2n) \big)^2}{1-p} + u^2 }.
\end{align}
Using the value of $z^*$ in \eqref{ch5_optZGADC} gives,
\begin{equation}
    \chi^{(1)}(\AC_{p,n}) = \frac{1}{2} \big( f(r^*) - \log_2 (1-u^2) -uf'(u) \big).
    \label{GADCHolevo}
\end{equation}
Solving~\eqref{GADCebt} for $n \leq 1/2$ gives a range,
\begin{equation}
    p^* \leq p \leq 1,
    \label{EBTRange}
\end{equation}
where the GADC in entanglement breaking. Here the value,
\begin{equation}
    p^* = \max\Big( 2(\sqrt{2} - 1),  \frac{\sqrt{1 + 4n(1-n)}-1}{2n(1-n)} \Big).
\end{equation}
As indicated earlier, entanglement breaking channels have additive Holevo
capacity.  Thus, when $p$ satisfies \eqref{EBTRange}, the GADC has additive
Holevo capacity. While the Holevo information $\chi^{(1)}(\AC_{n,p})$ gives the
product state classical channel capacity, it doesn't give an explicit encoding
and decoding that achieves this capacity. In what follows, we construct
explicit encoding and decodings---in other words, we construct induced classical channels,
and compare the capacity of these channels to the product state classical
capacity $\chi^{(1)}(\AC_{n,p})$. For $n=1/2$, we find the optimal encoding
and decoding which achieves $\chi^{(1)}(\AC_{1/2,p})$ for all $0 \leq p \leq 1$.

\subsection{Induced Channels}
\label{SsubICAmpGln}
To obtain an induced channel for $\AC_{p,n}$ one must choose an encoding and
decoding. To choose an encoding, $\EC:x \mapsto \rho(x)$, one fixes a set
of input states $\{\rho(x)\}$.  To choose a decoding, $\DC: \rho(x) \mapsto
y$, one fixes an output measurement POVM $\{\Lm(y)\}$. Together the
encoding-decoding results in an induced channel with conditional probability
$p(y|x) = \Tr(\rho(x) \Lm(y))$.
A priori, there is no clear choice for these input states and output
measurement. However, the generalized qubit amplitude damping channel satisfies
an equation
\begin{equation}
    \AC_{p,n}\big(\sg^z_a \; \rho \; (\sg^z_a)^{\dag} \big)  = \sg^z_b \; \AC_{p,n}(\rho) \; (\sg^z_b)^{\dag},
    \label{ampCov}
\end{equation}
where the subscripts $a$ and $b$ on the Pauli operator $\sg^z$ signify the
space on which the operator acts. The above equation implies that the
generalized amplitude damping channel has a rotational symmetry around the
$z$-axis.
Using this rotational symmetry and the fact that $\AC_{p,n}$ is a qubit
input-output channel one may choose an encoding $\EC:x \mapsto \rho(x)$
where $x = 0$ or $1$ and $\{\rho(x)\}$ are two orthogonal input states that
remain unchanged under the $\sg^z_a$ symmetry operations; that is,
$\rho(x) = [x]$.  To decode, one may apply a protocol
for correctly identifying a state chosen uniformly from a set of two known
states $\AC_{p,n}([0])$ and $\AC_{p,n}([1])$ with highest probability.
This protocol comes from the theory of quantum state
discrimination~\cite{Helstrom69}. It uses a POVM with two elements $\{E, I_b -
E\}$, where $E$ is a projector onto the space of positive eigenvalues of
$\AC_{p,n}([0]) - \AC_{p,n}([1])$. An unknown state, either $\AC_{p,n}([0])$ or
$\AC_{p,n}([1])$ with equal probability, is measured using the POVM. If the
outcome corresponding to $E$ occurs, the unknown state is guessed to be
$\AC_{p,n}([0])$; otherwise, the guess is $\AC_{p,n}([1])$. In the present
case, a simple calculation shows that $E = [0]$.

\begin{figure}[ht]
    \begin{subfigure}{.5\textwidth}
        \includegraphics[scale=.5]{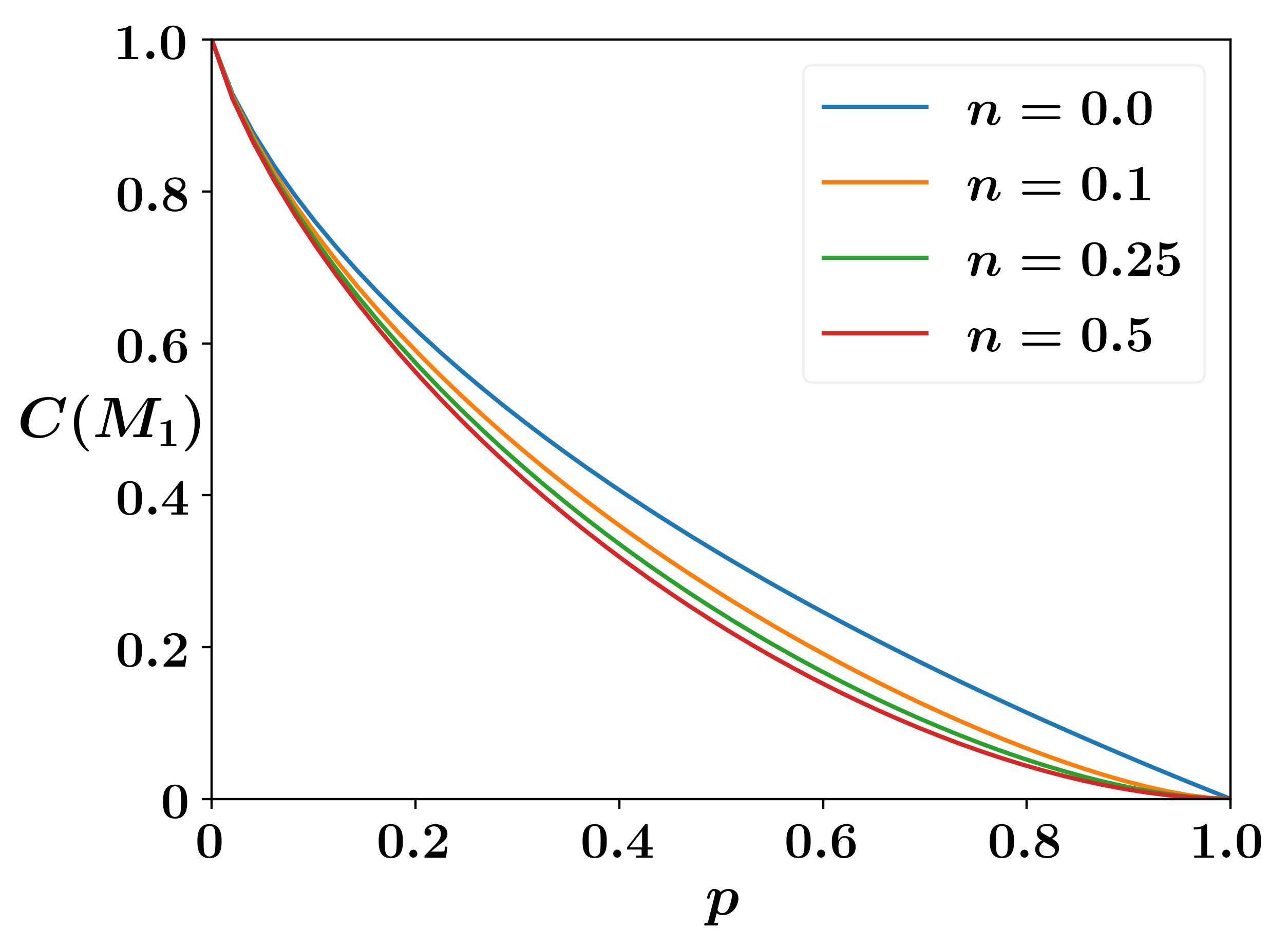}
        \caption{The capacity $C(M_1)$ of the induced channels $M_1$ \newline as a
        function of $p$ for various values of the parameter $n$. \newline 
        This channel
        $M_1$ is defined by transition probability \newline matrix $P'$ in
        \eqref{IC1AmpGln}.\label{ICAmp2Caps}}
    \end{subfigure}
    \begin{subfigure}{.5\textwidth}
        \includegraphics[scale=.5]{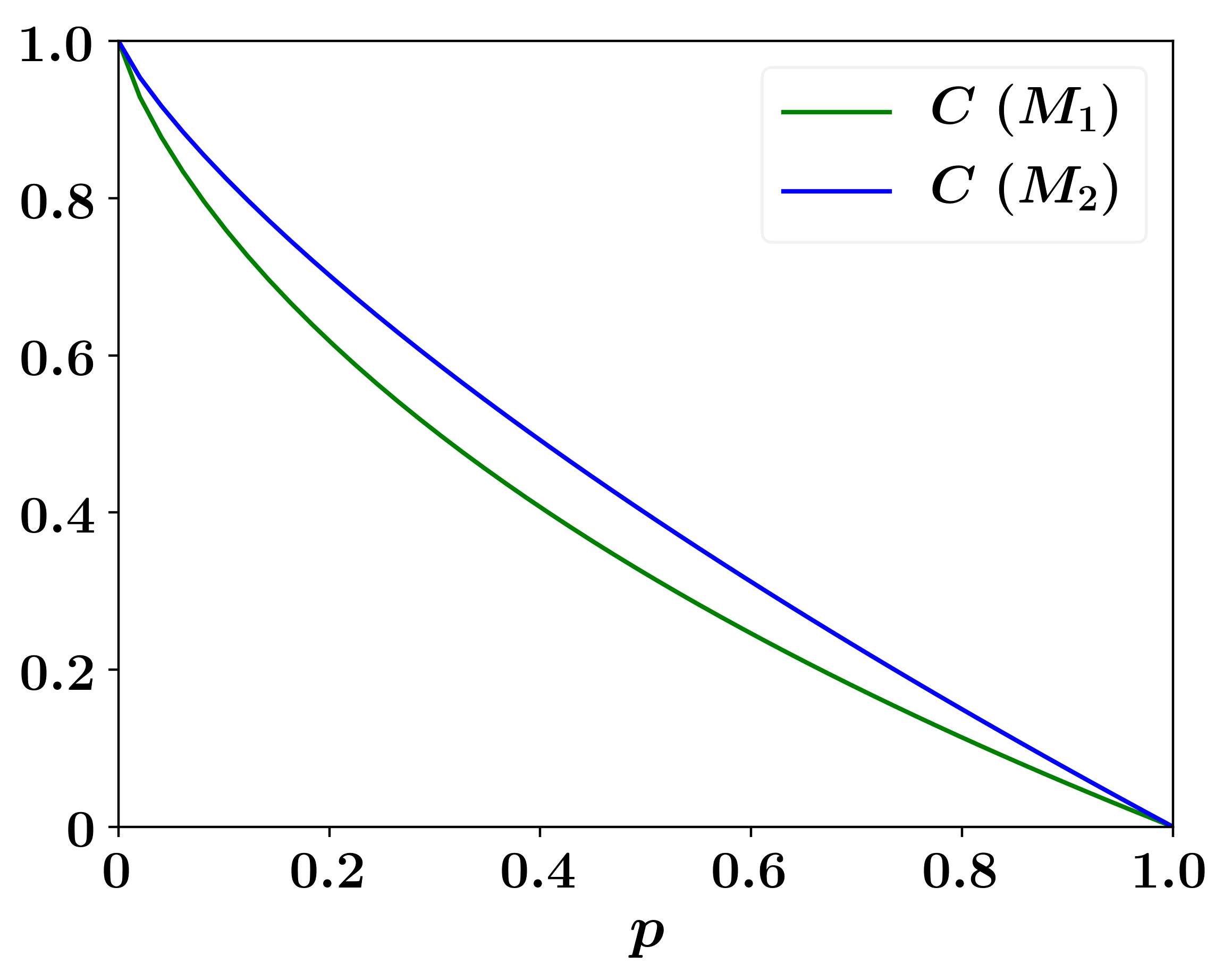}
        \caption{As a function of $p$, the capacity $C(M_1)$ is largest when $n
        = 1/2$. This largest capacity is plotted in green and $C(M_2)$ is
        plotted in blue. The induced channels $M_1$ and $M_2$, are defined by
        transition probability matrices $P'$ and $Q(0)$ in \eqref{IC1AmpGln}
        and \eqref{IC2Amp}, respectively.  \label{ICAmp1Caps}}
    \end{subfigure}
    \caption{Capacities of induced channels~\label{figsInd}}
\end{figure}

The state discrimination protocol outlined above can be used as the decoding
map $\DC: \rho(x) \mapsto y$, which measures $\rho(x)$ using the POVM $\{\Lm(y)
= [y]\}$ and returns $y \in \{0,1\}$ with conditional probability $\Tr(\rho(x)
\Lm(y))$. This choice of decoding, together with the encoding, $\EC:x \mapsto
[x]$, results in an induced channel $M_1$ with transition probability matrix
\begin{equation}
    P' = \mat{1-pn & p(1-n) \\
             pn & 1 - p(1-n)}.
\label{IC1AmpGln}
\end{equation}
This induced channel $M_1$ corresponding to the above matrix $P'$ is a binary
{\em asymmetric} channel that flips $x=00$ to $x=1$ with probability $pn$ but
flips $x=1$ to $x=0$ with probability $p(1-n)$. The capacity of $M_1$ is given by
\begin{equation}
    C(M_1) = \max_{0 \leq a \leq 1 } h(r) - (1-a) h( pn ) - ah\big(p(1-n)\big),
    \label{IC1Amp2Cap}
\end{equation}
where $r = (1-pn) - a(1-p)$.  From the above equation, it is clear that $C(M_1)$
is unchanged when $n$ is replaced with $1-n$. We restrict ourselves to $0 \leq
n \leq 1/2$. As can be seen from Fig.~\ref{ICAmp2Caps}, $C(M_1)$ decreases with
$n$ for $0 \leq n \leq 1/2$.

Next, we consider a different induced channel where the encoding is performed
using possibly non-orthogonal states and decoding is performed using a measurement
designed to distinguish these encoded states at the channel output with
maximum probability.  The encoding maps $x=0$ and $x=1$ to $[\al_+]$ and
$[\al_-]$~(defined via eq.~\eqref{ensGAPD}), respectively. The decoding is
performed using a POVM $\{ \Lm(y) \}$ where $\Lm(y)$ at $y=0$ is the
projector onto the space of positive eigenvalues of $\AC_{p,n}([\al_+]) -
\AC_{p,n}([\al_-])$. This projector is simply $[x+]$, where $\ket{x+} :=
(\ket{0} + \ket{1})\sqrt{2}$. This encoding-decoding scheme results in
a one-parameter family of induced channels $M_2(z)$. This family of channels
has a transition matrix
\begin{equation}
    Q(z) = \mat{q(z) & 1-q(z) \\
             1-q(z) & q(z)},
\label{IC2Amp}
\end{equation}
where $q(z) = (1- a(z)\sqrt{1-p})/2$, and $a(z) = \sqrt{1-z^2}$. For any $z$,
the induced channel $M_2(z)$ is a binary symmetric channel with flip
probability $q(z)$. Interestingly, this family of induced channels coming from
the GADC $\AC_{p,n}$ does not depend on the parameter $n$ in the channel. The
Shannon capacity of $M_2(z)$ is simply
\begin{equation}
    C(M_2(z)) = \; 1 - h \big(q (z) \big).
    \label{M2Shannon}
\end{equation}
For a fixed $p$, one can easily show that $C(M_2(z))$ is maximum when $z = 0$;
thus, $M_2 =M_2(0)$ has the largest Shannon capacity among the one-parameter
family of induced channels $M_2(z)$. This induced channel $M_2$ is simply a
binary symmetric channel with flip probability $q = (1 - \sqrt{1-p})/2$.  We
compare the capacity of $M_2$ with that of $M_1$ at $n=0$, defined earlier~(see
Fig.~\ref{ICAmp1Caps}) to find that 
\begin{equation}
    C(M_2) \geq C(M_1),
    \label{indComp}
\end{equation}
for all $p$ and $n$. In Fig.~\ref{Fig_ICHolevo}, we compare the capacity
$C(M_2)$ of the induced channel $M_2$ with the Holevo information of the GADC
for various values $n$.  We numerically find that for values of $n < 1/2$ and
$0 < p < 1$, $C(M_2)< \chi^{(1)}(\AC_{p,n})$. Next we focus on $\AC_{p,1/2}$,
the GADC at $n=1/2$.

\begin{theorem}
\label{thm:CapEqInd}
For the $n=1/2$ GADC, $\AC_{p,1/2}$, the Shannon capacity, product-state
capacity, and classical capacity are all equal to that of a binary
symmetric channel with flip probability $q = (1 - \sqrt{1-p})/2$.
\end{theorem}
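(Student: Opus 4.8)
The plan is to pin all three capacities to the single value $1-h(q)$, the Shannon capacity of the binary symmetric channel with flip probability $q=(1-\sqrt{1-p})/2$, by sandwiching them between a matching lower and upper bound. Throughout I would use the general ordering established in the preliminaries,
\[
    C_{Shan}(\AC_{p,1/2}) \;\leq\; C_{pj}(\AC_{p,1/2}) \;\leq\; C_{jj}(\AC_{p,1/2}),
\]
which reduces the task to a matching lower bound on the smallest quantity $C_{Shan}$ and a matching upper bound on the largest quantity $C_{jj}$.

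For the lower bound I would simply invoke the explicit induced channel $M_2=M_2(0)$ constructed above: encoding $x\in\{0,1\}$ into $[\al_+],[\al_-]$ at $z=0$ and decoding by the $\sg^x$-eigenbasis measurement realizes a binary symmetric channel of flip probability $q$, whose Shannon capacity is $1-h(q)$ by \eqref{M2Shannon}. Since this is a legitimate product encoding--product decoding scheme, $C_{Shan}(\AC_{p,1/2})\geq 1-h(q)$.

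For the upper bound I would evaluate the Holevo information \eqref{ch5_HolChi} at $n=1/2$ directly, bypassing the transcendental system \eqref{ch5_optZGADC}. Setting $1-2n=0$ in the Bloch action \eqref{ampGlnBlochVec}, the average state $\sg$ has Bloch vector $(0,0,z)$ and maps to norm $(1-p)|z|$, while each signal $[\al_\pm]$ maps to norm $\sqrt{(1-p)(1-pz^2)}$; substituting into \eqref{qBitEntro} reduces the objective to
\[
    g(z) \;=\; h\!\left(\tfrac{1-(1-p)|z|}{2}\right) - h\!\left(\tfrac{1-\sqrt{(1-p)(1-pz^2)}}{2}\right),
\]
with $g(0)=1-h(q)$. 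The crux is to show $z=0$ is the global maximizer. I would argue this by monotonicity in $|z|$: as $|z|$ grows the first argument decreases below $1/2$, shrinking the positive term, while $\sqrt{(1-p)(1-pz^2)}$ decreases, pushing the second argument up toward $1/2$ and enlarging the subtracted term; since both arguments remain in $[0,1/2]$, where $h$ is increasing, both effects drive $g$ downward, so $g$ is decreasing in $|z|$ and hence $\chi^{(1)}(\AC_{p,1/2})=1-h(q)$. I expect this monotonicity check to be the main obstacle, since it requires keeping track of the increasing branch of the binary entropy separately for each term.

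Finally, to upgrade $C_{pj}$ to $C_{jj}$ I would use that $\AC_{p,1/2}$ is unital (noted after \eqref{ampGlnBlochVec}) together with the additivity of the Holevo information for unital qubit channels \cite{King02}; additivity collapses the limit \eqref{ch1_HolCap} to a single letter, giving $C_{jj}(\AC_{p,1/2})=\chi^{(1)}(\AC_{p,1/2})=1-h(q)$. Combining, $1-h(q)\leq C_{Shan}\leq C_{pj}=\chi^{(1)}=1-h(q)$ and $C_{jj}=1-h(q)$ force all three capacities to equal the binary symmetric channel capacity $1-h(q)$, as claimed.
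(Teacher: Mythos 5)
Your proposal is correct and follows essentially the same route as the paper: a sandwich argument in which the induced binary symmetric channel $M_2$ gives the lower bound $C_{\text{Shan}}(\AC_{p,1/2}) \geq 1-h(q)$, the Holevo expression \eqref{ch5_HolChi} evaluated at $n=1/2$ gives the matching upper bound, and King's additivity for unital qubit channels collapses $\chi(\AC_{p,1/2})$ to $\chi^{(1)}(\AC_{p,1/2})$. The only (harmless) deviation is in the upper-bound step, where you show $z=0$ globally maximizes the coupled objective $g(z)$ via monotonicity of both terms---a valid and in fact slightly sharper argument than the paper's cruder decoupling $\max_z\bigl(A(z)-B(z)\bigr) \leq \max_z A(z) - \min_z B(z)$, which uses only $S\bigl(\AC_{p,1/2}(\sg)\bigr)\leq 1$ for the qubit output together with the fact that the signal-entropy term is minimized at $z=0$.
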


\begin{proof}
    Our proof has two key ingredients: first is a well-known additivity of
    $\chi^{(1)}(\AC_{p,1/2})$ and second is an argument to show that capacity
    of a binary symmetric induced channel of $\AC_{p,1/2}$ bounds
    $\chi^{(1)}(\AC_{p,1/2})$ from above and below. 
    
    Notice the $n=1/2$ GADC, $\AC_{p,1/2}$ is unital. As a result, the channel's
    Holevo information is additive and equals the channel's classical capacity;
    that is,
    \begin{equation}
        \chi(\AC_{p,1/2}) = \chi^{(1)}(\AC_{p,1/2}).
        \label{addEq1}
    \end{equation}
    The Holevo information $\chi^{(1)}(\AC_{p,1/2})$ is the product-state
    classical capacity of $\AC_{p, 1/2}$. This product state capacity bounds
    the Shannon capacity of $\AC_{p,1/2}$ from above. In turn, this Shannon
    capacity upper bounds the capacity of any induced channel of $\AC_{p,1/2}$
    that uses product encoding and decoding. Since the induced channel $M_2$,
    defined in~\eqref{M2Shannon}, uses product encoding and product decoding,
    \begin{equation}
        C(M_2) \leq C_{\text{Shan}} (\AC_{p,1/2}) \leq \chi^{(1)}(\AC_{p,1/2}).
        \label{lowerCM2}
    \end{equation}
This induced channel $M_2$ is a binary symmetric channel with flip probability
$q = (1 - \sqrt{1-p})/2$. The channel's capacity is
\begin{equation}
    C(M_2) = 1 - h(q),
    \label{capM2}
\end{equation}
where $h$ is the binary entropy function.  We now show this capacity $C(M_2)$
bounds $\chi^{(1)}(\AC_{p,1/2})$ from above. From eq.~\eqref{ch5_HolChi},
\begin{equation}
    \chi^{(1)}(\AC_{p,1/2})  = \max_{-1 \leq z \leq 1} \{
    S\big(\AC_{p,1/2}(\sg)\big) - [S\big(\AC_{p,1/2}([\al_+])\big) +
    S\big(\AC_{p,1/2}([\al_-])\big)]/2 \}.
    \label{chiExp1}
\end{equation}
Using~\eqref{ensGAPD},\eqref{ampGlnBlochVec}, and \eqref{qBitEntro} we find
that 
\begin{equation}
    S\big(\AC_{p,1/2}([\al_+])\big) =
    S\big(\AC_{p,1/2}([\al_-])\big) = h\big((1- |\rB_b|)/2\big),
    \label{ents}
\end{equation}
where $\rB_b = \big( \sqrt{(1-p)(1-z^2)}, 0, (1-p)z \big)$. Using~\eqref{ents}
and~\eqref{chiExp1} we bound the value of $\chi^{(1)}(\AC_{p,1/2})$ from
above as follows:
\begin{equation}
    \chi^{(1)}(\AC_{p,1/2})  \leq \max_{-1 \leq z \leq 1}
    S\big(\AC_{p,1/2}(\sg)\big) - \min_{-1 \leq z \leq 1} h\big((1 - |\rB_b|)/2\big)
    \label{chiExp2}
\end{equation}
Notice (1) the maximum value of $S\big(\AC_{p,1/2}(\sg)\big)$ is at most $1$
because $\AC_{p,1/2}$ has a qubit output; (2) $h((1 - |\rB_b|)/2)$ is
monotonically decreasing in $|\rB_b|$ and takes its minimum value $h\big((1
-\sqrt{1-p})/2\big)$ at $z=0$. Using these two facts in~\eqref{chiExp2} along
with~\eqref{capM2},
\begin{equation}
    \chi^{(1)}(\AC_{p,1/2}) \leq 1 - h(q) = C(M_2).
    \label{chiUpper}
\end{equation}
Together,~\eqref{chiUpper}, \eqref{lowerCM2}, and \eqref{addEq1} prove
\begin{equation}
    C(M_2) = C_{Shan}(\AC_{p,1/2}) = \chi^{(1)}(\AC_{p,1/2}) = \chi(\AC_{p,1/2}) = 1 - h(q),
    \label{eqCap}
\end{equation}
where $q = (1 - \sqrt{1-p})/2$.
\end{proof}

This equality~\eqref{eqCap} above shows that the induced channel $M_2$, obtained
from product encoding and product decoding, achieves not only the Shannon
capacity $C_{Shan}(\AC_{p,1/2})$ but also the product state capacity of
$\AC_{p,1/2}$.  A by-product is an alternate proof for the product state
capacity $\chi^{(1)}(\AC_{1/2,p})$ expression~(see Ex.8.1 in~\cite{Holevo12}).
Even more notably, the product state capacity in general allows for joint
decoding of its product state inputs; however, we find that product decoding of
the type in the induced channel $M_2$ suffices to achieve this capacity. For
$\AC_{p,1/2}$ the product state capacity $\chi^{(1)}(\AC_{p,1/2})$ is additive
and equals the ultimate channel capacity $\chi(\AC_{p,1/2})$. This ultimate
capacity allows for the more general joint encoding and joint decoding, yet the
additivity of $\chi^{(1)}(\AC_{p,1/2})$, along with the equality
in~\eqref{eqCap}, show how this ultimate capacity is simply achieved using
product encoding and product decoding.

\begin{figure}[ht]
    \begin{center}
        \includegraphics[scale=.75]{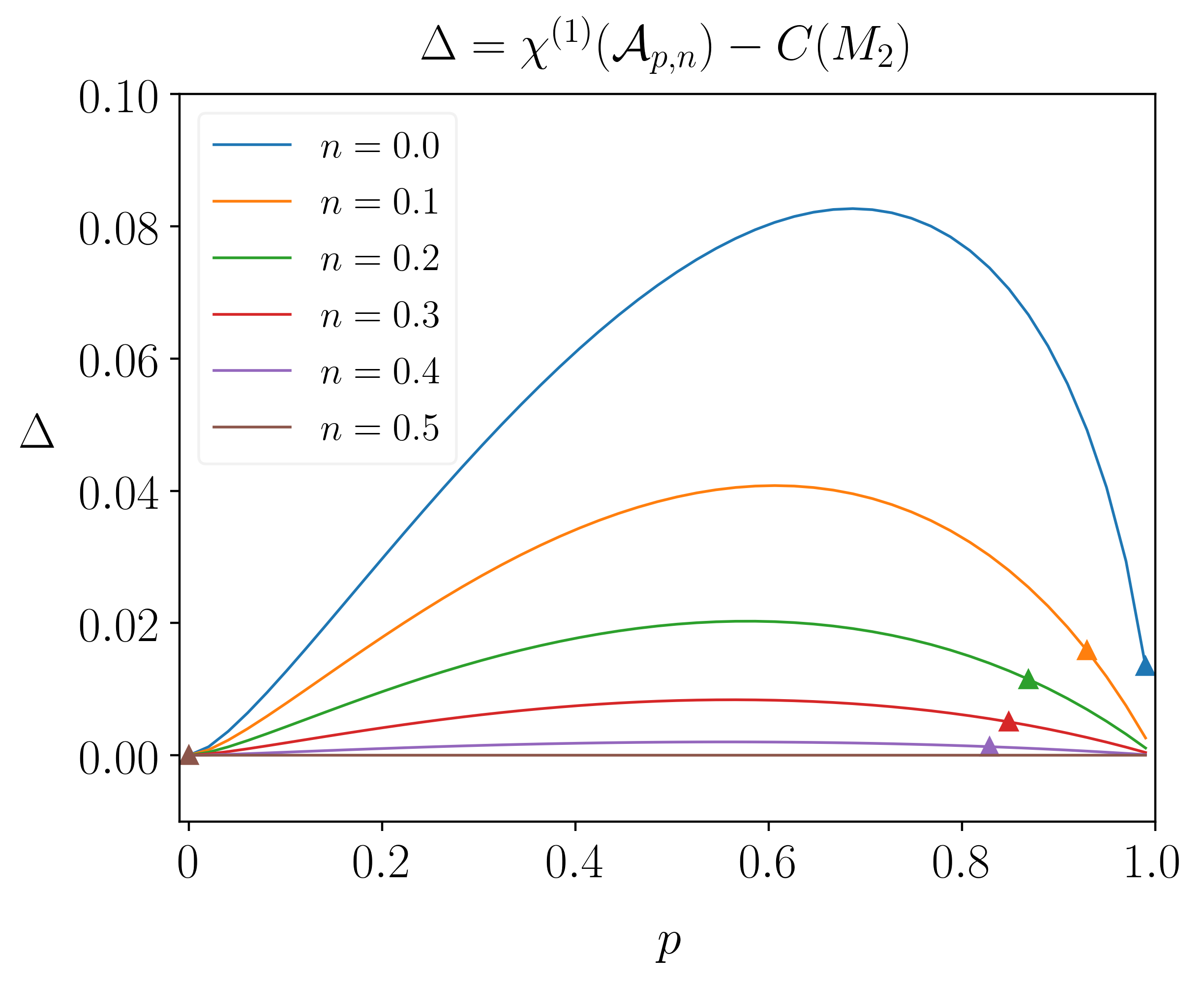}
        \caption{The difference $\Dl = \chi^{(1)}(\AC_{p,n}) - C(N_2)$ as a
        function of $p$ for various values of the parameter $n$. For each $n$,
        the colored triangle indicates the value of $p^*$ above which 
        $\chi^{(1)}(\AC_{p,n})$ is additive.\label{Fig_ICHolevo}}
    \end{center}
\end{figure}

In what follows, we focus on the $n=1/2$ GADC $\AC_{p,1/2}$. As discussed
below~\eqref{ampDampGlmcv}, this channel describes noise in which both
computational basis states $\ket{0}$ and $\ket{1}$ are treated on equal
footing. When information about which of these computational basis states
decays faster than the other, the GADC with $n\neq 1/2$ is an apt noise model.
However when such information is unavailable, or when it is known that both
computational basis states decay but the maximally mixed state doesn't, one
uses the $n=1/2$ GADC. One simple example of such noise is the qubit thermal
channel~(analogous to the bosonic thermal channel~\cite{KhatriSharmaEA20,
MyattKingEA00,TurchetteMyattEA00}) in which the channel environment is
represented by the maximally mixed state. Another simple example is the effect
of dissipation to an environment at a finite temperature~\cite{NielsenChuang11}.

\section{Decoherence in Buffer: A Queue-channel Approach}
\label{SqueueCap}
For  communicating a message over a GAD channel, a sequence of product quantum
states has to be transmitted serially. In an idealized i.i.d. setting, it is
implicitly assumed that the transmission of each state takes a fixed amount of
time and that the qubits are  prepared accordingly to avoid any buffering.
However, in practice,  preparation times as well as transmission (and
reception) times are stochastic due to the inherent quantum physical nature of
the devices. This naturally leads to {\em queuing} of qubits at the buffer of
the transmitter, and the queued qubits decohere due to their interactions with
the environment of the buffer. A schematic of this setting is presented in
Fig.~\ref{fig:schematic}.

Note that the decoherence while waiting in the buffer is in addition to the
decoherence experienced by the qubits while passing through the channel.  The
decoherence experienced by a qubit in the buffer depends on its waiting time in
the buffer: the longer the wait, the worse the decoherence. Due to the queuing
dynamics, the waiting times of the qubits are different and non i.i.d. Hence,
the effective decoherence experienced by the qubits are also non i.i.d..

Most information theoretic channel models assume an idealized scenario without
any decoherence in the  transmission buffer. In Sec.~\ref{SAmpDamp}, we studied
the capacity of GAD channels assuming such an idealized scenario and showed
that non-entangled encoding and decoding schemes can  achieve the capacity
of symmetric ($n=\frac{1}{2}$) GADC. In this section, we introduce a GAD {\em
queue-channel} that can model both channel and buffer decoherence. The concept
of a  {\em quantum queue-channel} was introduced in
\cite{MandayamJagannathanEA20} by adapting a related classical notion from
\cite{ChatterjeeSeoEA17}. Building on  the results in Sec.~\ref{SAmpDamp}, we
characterize the capacity of  GAD queue-channels and  show that non-entangled
encoding and decoding schemes achieve that capacity.

\subsection{Symmetric GAD queue-channel}

For a symmetric GADC, the parameter $p$ captures the level of damping
experienced by a qubit while interacting with an environment. In the absence
of buffer decoherence, $p$ depends on the flight time $T_f$ through the
channel and the physical parameters of the channel. Similarly, the level of
damping experienced in the buffer depends on the waiting time in the buffer
$W$ and the physical parameters of the buffer. Hence, the effective GADC
parameter experienced by a qubit is a function $g(T_f,W)$ of its waiting time
and its flight time, where the form of $g(\cdot)$ depends on the physical
parameters of the channel and the buffer. As the flight time is almost
deterministic, for simplicity of notations we denote this function by
$p_{\mbox{eff}}(W)$. 

In optical quantum communication, the prevalent form of quantum communication,
often the transmission channel and the buffer are both made of optical fibers.
Also, in practice, it is mostly the case that the noise treats both $[0]$ and
$[1]$ identically. So, it is natural to assume that the mixing parameter $n$ is
$\frac{1}{2}$ and $n$ does not depend on the waiting time. On the other hand,
the damping parameter $p$ depends on the time of interaction with the
environment. These motivated the above modeling assumptions regarding symmetric
GAD queue-channel with waiting time dependent damping parameter
$p_{\mbox{eff}}(W)$. However, for other modes of quantum communication, the
models for buffer decoherence and channel decoherence can be different. This
would be an interesting direction for future exploration.
 
A classical message is encoded into a sequence of classical states $x_1, x_2,
\ldots, x_k$, which in turn is transmitted as a sequence of quantum states or
qubits $\rho_1, \rho_2, \ldots, \rho_k$. Effectively, qubit $i$ is received at
the receiver after its passage through a symmetric GADC with parameter
$p_{\mbox{eff}}(W_i)$, where $W_i$ is the time between the preparation of the
$i$th qubit and its transmission. Given the knowledge of $W_1, W_2, \ldots,
W_k$ at the receiver, qubits experience  independent but not identically
distributed symmetric generalized amplitude damping decoherence. 
 
We complete the description of the combined channel with the mathematical model
of the queuing system that gives rise to the waiting time sequence. The
buffering process is modeled as a continuous-time single-server queue. To be
specific, the single-server queue is characterised by (i) A server that
processes the qubits in the order in which they arrive, that is in a First Come
First Served (FCFS) fashion\footnote{The FCFS assumption is not required for
our results to hold, but it helps the exposition.}, and (ii) An "unlimited
buffer" --- that is, there is no limit on the number of qubits that can  wait
to be transmitted. We denote the time between preparation of the $i$th and
$i+1$th qubits by $A_i$, where $A_i$ are i.i.d. random variables. These $A_i$s
are viewed as inter-arrival times of a point process of rate $\lambda,$ where
$\mathbb E[A_i]=1/\lambda.$ The "service time," or the time taken to transmit
qubit $i,$ is denoted by $S_i$, where $\{S_i\}$ are also assumed to be i.i.d.
random variables, independent of the inter-arrival times $A_i,i\geq 1$. The
"service rate" of the qubits is denoted by $\mu=1/\mathbb E[S_i].$ We assume
that $\lambda<\mu$ (i.e., mean transmission time is strictly less than the mean
preparation time) to ensure stability of the queue. Qubit $1$ has a waiting
time $W_1=S_1$. The waiting times of the other qubits can be obtained using the
well known Lindley's recursion: \[W_{i+1} = \max(W_i - A_i, 0) + S_{i+1}.\] In
queuing parlance, the above system describes a continuous-time $G/G/1$ queue.
Under mild conditions, the sequence $\{W_i\}$ for a stable $G/G/1$ queue is
\emph{ergodic,} and reaches a \emph{stationary distribution} $\pi.$ We assume
that the waiting times $\{W_i\}$ of the qubits are  {\em not} available at the
transmitter during encoding, but are available at the receiver during decoding. 

An important difference between the queue-channel introduced above and the
usual i.i.d. channels is that this channel is a part of continuous time
dynamics. Hence, the usual notion of capacity per {\em channel use} for i.i.d.
channels is not pertinent here.  As mentioned before, the above channel model
is closely related to quantum queue-channels studied in
\cite{MandayamJagannathanEA20}. So, we first do a short review of the notion of
capacity per {\em unit time} and some relevant capacity results in
\cite{MandayamJagannathanEA20}.
 
 \subsubsection{Classical capacity of additive quantum queue-channels}
\begin{definition}
\label{def:contAchievableRate}
A rate $R$ is called an achievable rate for a quantum queue-channel if there
exists a sequence of $(n, 2^{R T_n})$ quantum codes with probability of error
$P_e^{(n)} \to 0$ as $n\to \infty$ and $\mathbf{E}\left[\sum_{i=1}^{n-1}
    A_i+W_n\right] \le T_n$.
\end{definition}

\begin{definition}
\label{def:capacity}
The information capacity of the queue-channel is the supremum of all achievable
rates for a given arrival and service process, and is denoted by $C$ bits per
    unit time.
\end{definition}
Note that the information capacity of the queue-channel depends on the arrival
process, the service process, and the noise model. We assume that the receiver
knows the realizations of the arrival and the departure times of each symbol, a
realistic assumption in several physical scenarios, as discussed
in~\cite{MandayamJagannathanEA20}.

\begin{definition}[Additive quantum queue-channel] A quantum queue-channel $\vec{\mathcal{N}}_{\vec{W}}$ is said to be additive if the Holevo information of the underlying single-use quantum channel $\mathcal{N}$ is additive. Specifically, additivity of the Holevo information of the quantum channel $\mathcal{N}$ implies
\[ \chi^{(1)} (\mathcal{N}_{W_{1}} \otimes \mathcal{N}_{W_{2}}) = \chi^{(1)} (\mathcal{N}_{W_{1}}) + \chi^{(1)} (\mathcal{N}_{W_{2}}) . \] 
\end{definition}

\begin{proposition}
\label{prop:expression}
The capacity of the quantum queue-channel (in bits/sec) is given by,
\begin{align}
C = \lambda \sup_{\{\vec{P}, \vec{\rho}\}}\underline{\mathbf{I}} (\, \{\vec{P},\vec{\rho}\},\vec{\mathcal{N}}_{\vec{W}} \, ),
\label{eq:cap_exp2}
\end{align}
where,  $\underline{\mathbf{I}}( \, \{ \vec{P}, \vec{\rho} \, \}, \vec{\mathcal{N}}_{\vec{W}} \, )$ is the quantum spectral inf-information rate defined in Eq. \ref{eq:quantum_I}.
\end{proposition}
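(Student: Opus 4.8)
The plan is to derive the per-unit-time capacity from the classical capacity \emph{per channel use} of the underlying effective channel sequence, leaning on two ingredients: the conditional independence of the effective noise given the waiting times, and the asymptotic relationship between the number of transmitted qubits and the elapsed physical time. First I would record the conditional independence structure: conditioned on a realization $\vec{W} = (W_1, \dots, W_n)$, the $n$-use effective channel factorizes as $\mathcal{N}_{W_1} \otimes \cdots \otimes \mathcal{N}_{W_n}$, each factor being the single-use GADC with damping parameter $p_{\text{eff}}(W_i)$. Since the receiver knows $\{W_i\}$, these waiting times act as classical side information folded into the channel sequence, so that $\vec{\mathcal{N}}_{\vec{W}}$ is a bona fide non-i.i.d.\ channel sequence to which the information-spectrum machinery of \cite{HayashiNagaoka03} applies.

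Given this, I would invoke the central result of \cite{HayashiNagaoka03} quoted above to assert that the classical capacity of the sequence $\vec{\mathcal{N}}_{\vec{W}}$, measured in bits per channel use, equals $C_{\text{use}} = \sup_{\{\vec{P}, \vec{\rho}\}} \underline{\mathbf{I}}(\{\vec{P}, \vec{\rho}\}, \vec{\mathcal{N}}_{\vec{W}})$, with $\underline{\mathbf{I}}$ as in~\eqref{eq:quantum_I}. The only remaining task is to convert this per-use rate into a rate per unit time, which is precisely where the factor $\lambda$ enters.

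For the conversion, the key estimate is $T_n / n \to 1/\lambda$. I would argue this directly from Definition~\ref{def:contAchievableRate}, where the tightest admissible time budget is $\mathbf{E}[\sum_{i=1}^{n-1} A_i + W_n]$. By the strong law of large numbers for the i.i.d.\ inter-arrival times, $\frac{1}{n}\sum_{i=1}^{n-1} A_i \to \mathbf{E}[A_1] = 1/\lambda$, while, under the stated mild conditions, stationarity and ergodicity of the stable $G/G/1$ queue give $\mathbf{E}[W_n] = O(1)$, hence $\frac{1}{n} W_n \to 0$. Consequently $T_n/n \to 1/\lambda$. A $(n, 2^{R T_n})$ code transmits $R T_n$ bits over $n$ channel uses, i.e., at per-use rate $R T_n / n \to R/\lambda$, so vanishing error over $n$ uses is possible iff $R/\lambda \le C_{\text{use}}$. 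Combining the converse and achievability directions of the spectral formula then yields $C = \lambda\, C_{\text{use}} = \lambda \sup_{\{\vec{P}, \vec{\rho}\}} \underline{\mathbf{I}}(\{\vec{P}, \vec{\rho}\}, \vec{\mathcal{N}}_{\vec{W}})$.

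The main obstacle I anticipate is the rigorous coupling of the time normalization with the randomness of the channel. On the achievability side one must ensure a code attaining per-use rate just below $C_{\text{use}}$ still respects the expected-time budget $T_n$; on the converse side one must translate a per-unit-time error guarantee into a per-use guarantee without the fluctuations of $W_n$ spoiling the required concentration. Controlling $\mathbf{E}[W_n]$ through queue stationarity, so that the $W_n$ contribution is $o(n)$ and does not inflate $T_n$, is the crucial step, and it is exactly here that the stability assumption $\lambda < \mu$ is used. The conditional independence, in turn, is what makes the sequence additive and guarantees that the spectral quantity $\underline{\mathbf{I}}$ is the correct single object characterizing the per-use capacity; without it this clean form would not survive. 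This overall argument mirrors the additive quantum queue-channel analysis of \cite{MandayamJagannathanEA20}.
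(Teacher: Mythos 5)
Your proposal is correct and coincides with the intended argument: the paper states Proposition~\ref{prop:expression} without proof, importing it from \cite{MandayamJagannathanEA20}, and the proof there runs exactly as you outline --- apply the Hayashi--Nagaoka information-spectrum capacity formula to the conditionally independent sequence $\vec{\mathcal{N}}_{\vec W}$ (with the waiting times delivered to the receiver as classical side information, which makes it a legitimate fixed channel sequence), then convert bits per channel use into bits per second via $T_n/n \to 1/\lambda$, with queue stability ensuring $\mathbf{E}[W_n]=O(1)$ so the waiting-time term is an $o(n)$ correction to the time budget in Definition~\ref{def:contAchievableRate}. The one step you compress, folding the random $\vec W$ into the channel outputs so that the deterministic-sequence formula of \cite{HayashiNagaoka03} applies, is handled the same way in the cited proof, so there is no gap to flag.
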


We conclude this section by stating the general upper bound for the capacity of additive quantum queue-channels, proved in~\cite{MandayamJagannathanEA20}. 
\begin{theorem}\label{thm:queueCapacity_ub}
For an additive quantum queue-channel $\vec{\mathcal{N}}_{\vec{W}}$, the capacity is bounded as, 
\[ C \leq \lambda~\EX_{\pi}\left[ \chi^{(1)}(\mathcal{N}_{W}) \right]\ {\rm bits/sec.}, \] 
where, $\mathbf{E}_{\pi}$ is expectation with respect to the stationary distribution $\pi$ of $\{W_i\}$. Here $\chi^{(1)}(\mathcal{N}_{W})$ denotes the Holevo information of the single-use  quantum queue-channel corresponding to waiting time $W$.
\end{theorem}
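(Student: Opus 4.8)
The plan is to start from the exact capacity formula of Proposition~\ref{prop:expression}, namely $C = \lambda \sup_{\{\vec{P},\vec{\rho}\}} \underline{\mathbf{I}}(\{\vec{P},\vec{\rho}\},\vec{\mathcal{N}}_{\vec{W}})$, and to bound the supremum of the inf-information rate above by the stationary mean of the single-use Holevo information. Three ingredients drive the argument: a generic information-spectrum inequality relating $\underline{\mathbf{I}}$ to an averaged Holevo quantity, the assumed additivity of the Holevo information, and the ergodicity of the waiting-time sequence $\{W_i\}$.

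First I would establish the converse-type estimate that the quantum inf-information rate never exceeds the asymptotic per-symbol Holevo information of the chosen ensemble. Writing $\bar\rho^{(n)} = \sum_{X^n} P^n(X^n)\,\mathcal{N}^{(n)}_{\vec{W}}(\rho_{X^n})$ for the averaged output, the operator $\Gamma_{\{P^n,\rho\}}(a)$ appearing in~\eqref{eq:quantum_I} compares each conditional output $\mathcal{N}^{(n)}_{\vec{W}}(\rho_{X^n})$ against $e^{an}\bar\rho^{(n)}$, so that $\underline{\mathbf{I}}$ is exactly a liminf-in-probability of the per-symbol spectral divergence between these states. The standard information-spectrum inequality, that the spectral inf-divergence rate is dominated by the quantum relative-entropy rate, then yields
\[ \underline{\mathbf{I}}(\{\vec{P},\vec{\rho}\},\vec{\mathcal{N}}_{\vec{W}}) \;\leq\; \liminf_{n\to\infty}\frac{1}{n}\,\chi^{(1)}(\mathcal{N}^{(n)}_{\vec{W}}), \]
since the averaged relative entropy $\sum_{X^n} P^n(X^n)\,D\!\big(\mathcal{N}^{(n)}_{\vec{W}}(\rho_{X^n})\,\|\,\bar\rho^{(n)}\big)$ is precisely the Holevo quantity~\eqref{HolevoChi} of the output ensemble.

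Next I would exploit the conditional-independence structure: once the waiting times are fixed, the $n$-use channel factorizes as $\mathcal{N}^{(n)}_{\vec{W}} = \bigotimes_{i=1}^{n}\mathcal{N}_{W_i}$. The additivity hypothesis then collapses the product Holevo information into a sum, giving $\tfrac{1}{n}\chi^{(1)}(\mathcal{N}^{(n)}_{\vec{W}}) = \tfrac{1}{n}\sum_{i=1}^{n}\chi^{(1)}(\mathcal{N}_{W_i})$. Because $\{W_i\}$ is the stationary, ergodic waiting-time process of a stable $G/G/1$ queue and $\chi^{(1)}(\mathcal{N}_{W})$ is bounded (at most one bit for a qubit channel), the ergodic theorem forces this empirical average to converge to $\EX_\pi[\chi^{(1)}(\mathcal{N}_{W})]$. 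Taking the supremum over input ensembles, which leaves this deterministic limit untouched, and restoring the prefactor $\lambda$ produces the claimed bound $C \leq \lambda\,\EX_\pi[\chi^{(1)}(\mathcal{N}_{W})]$.

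The delicate point is the spectral-to-Holevo inequality of the first stage. Since the constituent channels $\mathcal{N}_{W_i}$ are non-identical and the $W_i$ are themselves random, one cannot invoke an i.i.d. spectral limit theorem directly; instead the receiver's knowledge of $\vec{W}$ must be carried as classical side information, so the outputs live on a hybrid classical-quantum register and the spectral divergence rate has to be controlled uniformly across realizations of $\vec{W}$. I expect the interchange of the liminf-in-probability with the stationary expectation---arguing that the spectral rate is dominated by the mean rather than by some larger typical-fluctuation quantity---to be the main technical hurdle. It is exactly here that additivity is needed to keep the final bound single-letter and that ergodicity is needed to identify the limiting constant.
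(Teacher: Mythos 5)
Your argument is correct and follows essentially the same route as the paper's source for this result: the paper states Theorem~\ref{thm:queueCapacity_ub} without reproducing its proof, importing it from \cite{MandayamJagannathanEA20}, and the proof there proceeds exactly as you propose --- the Hayashi--Nagaoka formula of Proposition~\ref{prop:expression}, the standard bound of the spectral inf-information rate by the normalized relative-entropy (Holevo) rate, factorization of $\mathcal{N}^{(n)}_{\vec{W}}$ into $\bigotimes_i \mathcal{N}_{W_i}$ given the waiting times with additivity collapsing $\chi^{(1)}$ of the product into $\sum_i \chi^{(1)}(\mathcal{N}_{W_i})$, and ergodicity of the stable $G/G/1$ waiting-time sequence identifying the limit $\EX_{\pi}\left[\chi^{(1)}(\mathcal{N}_{W})\right]$. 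You also correctly flag the one piece of bookkeeping your sketch leaves implicit: the receiver's knowledge of $\vec{W}$ is handled by appending a classical register to the output, under which the relative entropy of cq-states decomposes as an expectation over $\vec{W}$ of conditional Holevo quantities, so the liminf-versus-expectation interchange is routine for a bounded functional of an ergodic sequence rather than a genuine obstruction.
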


 \subsection{Capacity of the symmetric GAD queue-channel}
 To characterize the capacity of the above channel, we use the additive queue-channel capacity results from \cite{MandayamJagannathanEA20}. First, we present the converse result, that is the capacity upper bound. 
 \begin{corollary}
 \label{cor:qcUB}
 The capacity of a symmetric GAD queue-channel is upper bounded by 
 \[\lambda~\mathbf{E}_{\pi}\left[1-h\left(\frac{1-\sqrt{1-p_{\mbox{\em eff}}(W)}}{2}\right)\right].\]
 \end{corollary}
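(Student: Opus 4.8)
The plan is to derive this bound as an immediate corollary of Theorem~\ref{thm:queueCapacity_ub}, the general upper bound for additive quantum queue-channels, combined with the explicit single-letter capacity formula from Theorem~\ref{thm:CapEqInd}. There are really only two things to establish: that the symmetric GAD queue-channel qualifies as \emph{additive} in the sense required by Theorem~\ref{thm:queueCapacity_ub}, and that the single-use Holevo information has the stated closed form.

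First I would verify additivity. The single-use channel experienced by a qubit with waiting time $W$ is the symmetric GADC $\AC_{p_{\mbox{eff}}(W),1/2}$. Since $n=1/2$, this channel is unital---as recorded below~\eqref{ampGlnBlochVec}, $\AC_{p,\frac{1}{2}}(I) = I$---and unital qubit channels have additive Holevo information. Hence the underlying single-use channel $\mathcal{N}_W = \AC_{p_{\mbox{eff}}(W),1/2}$ satisfies $\chi^{(1)}(\mathcal{N}_{W_1} \ot \mathcal{N}_{W_2}) = \chi^{(1)}(\mathcal{N}_{W_1}) + \chi^{(1)}(\mathcal{N}_{W_2})$, so the symmetric GAD queue-channel meets the definition of an additive quantum queue-channel and Theorem~\ref{thm:queueCapacity_ub} is applicable.

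Applying that theorem gives at once
\[
    C \leq \lambda~\EX_{\pi}\!\left[ \chi^{(1)}(\AC_{p_{\mbox{eff}}(W),1/2}) \right],
\]
where the expectation is over the stationary waiting-time distribution $\pi$, whose existence is guaranteed by ergodicity of the stable $G/G/1$ queue. The final step is a direct substitution: Theorem~\ref{thm:CapEqInd}, via~\eqref{eqCap}, evaluates $\chi^{(1)}(\AC_{p,1/2}) = 1 - h\big((1-\sqrt{1-p})/2\big)$. Replacing $p$ with $p_{\mbox{eff}}(W)$ and inserting the result into the displayed bound yields exactly $\lambda~\EX_{\pi}\big[1 - h\big((1-\sqrt{1-p_{\mbox{eff}}(W)})/2\big)\big]$, which is the claimed upper bound.

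None of these steps involves a hard computation. The only point demanding care---and the one I would flag as the main ``obstacle''---is the additivity verification, because Theorem~\ref{thm:queueCapacity_ub} is stated only for additive queue-channels; the whole corollary hinges on recognizing that unitality of $\AC_{p,1/2}$ (equivalently, the $n=1/2$ symmetry underlying~\eqref{ampDampGlmcv}) is precisely what licenses the use of that bound here.
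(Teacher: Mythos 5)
Your proposal is correct and follows essentially the same route as the paper: the authors likewise invoke Theorem~\ref{thm:queueCapacity_ub} for additive queue-channels together with the Holevo-information formula $\chi^{(1)}(\AC_{p_{\mbox{eff}}(W),1/2}) = 1-h\big((1-\sqrt{1-p_{\mbox{eff}}(W)})/2\big)$ from Eq.~\eqref{eqCap}. Your explicit verification of additivity via unitality of $\AC_{p,1/2}$ (King's result for unital qubit channels) spells out a step the paper asserts more tersely, but it is the same underlying justification.
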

 This corollary follows directly from \cite{MandayamJagannathanEA20}[Theorem 1] using  Eq. \ref{eqCap} in Sec.~\ref{SsubICAmpGln}
 \[1-h\left(\frac{1-\sqrt{1-p_{\mbox{eff}}(W)}}{2}\right)=\chi^{(1)}\left(\AC_{p_{\mbox{\small eff}}(W),\frac{1}{2}} \right).\] 
 This is because the symmetric GAD queue-channel is an additive queue-channel and the upper bound in  \cite{MandayamJagannathanEA20}[Theorem 1] (stated here as Theorem~\ref{thm:queueCapacity_ub}) is valid for any additive queue-channel.

 To prove achievability, we build on the results presented in Sec.~\ref{SsubICAmpGln} and obtain the following result.
 
 \begin{theorem}\label{thm:qcAchieve}
 There exists a product encoding and a non-entangled decoding scheme that achieves a rate of 
 \[\lambda~\mathbf{E}_{\pi}\left[1-h\left(\frac{1-\sqrt{1-p_{\mbox{\em eff}}(W)}}{2}\right)\right]\]
 over a symmetric GAD queue-channel.
 \end{theorem}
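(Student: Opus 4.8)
The plan is to exhibit one explicit non-entangled scheme whose induced classical channel is a binary symmetric queue-channel, and then to read off its achievable rate from the classical analysis of conditionally-independent queue-channels. Since Corollary~\ref{cor:qcUB} already supplies the matching converse $\lambda\,\EX_\pi[1 - h(q(W))]$, producing this single scheme settles the capacity; the theorem itself, however, asks only for achievability.

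First I would fix the symbol-by-symbol product encoding $\EC$ sending the classical bit $0 \mapsto [x+]$ and $1 \mapsto [x-]$, where $\ket{x\pm} = (\ket{0} \pm \ket{1})/\sqrt{2}$ are the $\sg^x$-eigenstates --- these are precisely the $z=0$ states $[\al_+],[\al_-]$ of Sec.~\ref{SsubICAmpGln} --- together with the product decoding $\DC$ that measures each output qubit in the POVM $\{[x+],[x-]\}$. Neither operation uses entanglement. Because qubit $i$ passes through $\AC_{p_{\mbox{eff}}(W_i),1/2}$, the induced-channel computation of Sec.~\ref{SsubICAmpGln} --- in particular the $M_2(0)$ construction \eqref{IC2Amp}--\eqref{M2Shannon}, which is independent of $n$ --- shows that qubit $i$ sees a binary symmetric channel with crossover $q(W_i) = (1 - \sqrt{1 - p_{\mbox{eff}}(W_i)})/2$, whose capacity is $1 - h(q(W_i))$; by \eqref{eqCap} this equals $\chi^{(1)}(\AC_{p_{\mbox{eff}}(W_i),1/2})$, which is exactly why achievability will meet the converse.

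Next I would invoke the conditional independence structure: given the waiting times $W_1,\dots,W_k$ (known at the receiver), the $k$ induced binary symmetric channels are mutually independent, so the fixed scheme reduces the quantum queue-channel to a classical binary symmetric queue-channel with waiting-time-dependent crossover $q(W_i)$. As both encoding and decoding are non-entangled, any rate achievable on this induced classical channel is achievable over the quantum queue-channel by a non-entangled scheme. Feeding i.i.d.\ uniform inputs --- capacity-optimal for every binary symmetric channel --- the per-symbol conditional mutual information is $1 - h(q(W_i))$, and the information density conditioned on $\{W_i\}$ is a sum of independent terms. By ergodicity of $\{W_i\}$ with stationary law $\pi$, $\tfrac{1}{k}\sum_{i=1}^k [1 - h(q(W_i))] \to \EX_\pi[1 - h(q(W))]$ almost surely, so the classical inf-information rate of the induced channel equals $\EX_\pi[1 - h(q(W))]$. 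Converting from per-channel-use to per-unit-time through the arrival rate $\lambda$ (roughly $\lambda T$ symbols in time $T$, consistent with Definition~\ref{def:contAchievableRate}) then yields the achievable rate $\lambda\,\EX_\pi[1 - h(q(W))]$.

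The main obstacle is the rigorous evaluation of this classical inf-information rate for the non-i.i.d.\ induced channel: one must show that the conditional information density concentrates on its mean and that the per-use rate transfers correctly to a per-unit-time rate under the random number of symbols $\approx \lambda T$. Both are handled by the conditional-independence-plus-ergodicity argument for additive queue-channels in \cite{MandayamJagannathanEA20}; the genuinely new work is the quantum-to-classical reduction of the first two steps, which certifies that our explicit product scheme produces an induced channel satisfying the hypotheses of that framework, and that the reduction preserves both conditional independence given $\vec{W}$ and ergodicity of $\{W_i\}$.
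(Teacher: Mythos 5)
Your proposal is correct and takes essentially the same route as the paper's (sketched) proof: the $M_2$-type product encoding and decoding of Sec.~\ref{SsubICAmpGln} applied qubit-by-qubit with damping parameter $p_{\mbox{eff}}(W_i)$, reducing the quantum queue-channel to an induced binary symmetric queue-channel, followed by the conditional-independence and ergodicity achievability machinery of Theorem~4 in \cite{MandayamJagannathanEA20}. Your write-up simply fills in details the paper leaves implicit (the explicit $z=0$ states, the fixed $\{[x+],[x-]\}$ POVM, uniform inputs, and the per-channel-use to per-unit-time conversion via $\lambda$).
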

 
 \begin{proof}[Proof (sketch)]
 Proof of Theorem~\ref{thm:qcAchieve} follows using an induced channel approach. For proving achievability using non-entangled encoding and decoding, we use the product encoding and decoding similar to channel $M_2$ in Sec.~\ref{SsubICAmpGln}, excpt that the POVM for the $i$th qubit uses $p_{\mbox{eff}}(W_i)$ instead of $p$ in Eq. \ref{IC2Amp}. As $M_2$ is a binary symmetric channel, the rest of the proof of Theorem~\ref{thm:qcAchieve} follows using steps very similar to the proof of \cite{MandayamJagannathanEA20}[Theorem~4]. 
 \end{proof}

\subsection{Useful design insights} 
As the motivation for this work is the practical issues faced by current quantum networks, we end with some practical insights obtained from the analytical results.

In the idealized i.i.d. setting, the capacity per {\em unit time} for a given $\lambda$ is simply $\lambda$ times the capacity per {\em channel use}. For the stability of the transmission buffer, a necessary and sufficient condition is $\lambda<\mu$. These two facts together seem to imply that in practice, the preparation/transmission rate $\lambda$ should be close to $\mu$ for achieving high data rate (per unit time). However, as we show below, this is an erroneous design choice that may lead to highly sub-optimal rates.

In the presence of buffering and decoherence in the buffer, the optimal qubit preparation/transmission rate can be obtained by optimizing the capacity expression in Theorem~\ref{thm:qcAchieve}: \[\arg\max_{\lambda\in(0,\mu)} \lambda~\mathbf{E}_{\pi}\left[1-h\left(\frac{1-\sqrt{1-p_{\mbox{eff}}(W)}}{2}\right)\right].\]
Though it may appear that the capacity expression increases with $\lambda$, it is not so since $\pi(\cdot)$ depends on $\lambda$. 

In general, obtaining a closed form expression for the best $\lambda$ is not possible. We consider a simple setting where $\{S_i\}$ and $\{A_i\}$ are i.i.d. exponential random variables with mean $1$ and $\lambda^{-1}$ ($>1$), respectively. This is in fact the well known M/M/1 queue setting, where $\pi$ turns out to be exponential distribution with mean $(1-\lambda)^{-1}$. We assume the prevalent exponential decoherence model for decoherence in the buffer, that is for $\kappa>0$
\begin{align}p_{\mbox{eff}}(W) = 1 - \exp(-\kappa~W). \label{eq:expDecoh}\end{align}

\begin{figure}
    \centering
    \includegraphics[scale=0.4]{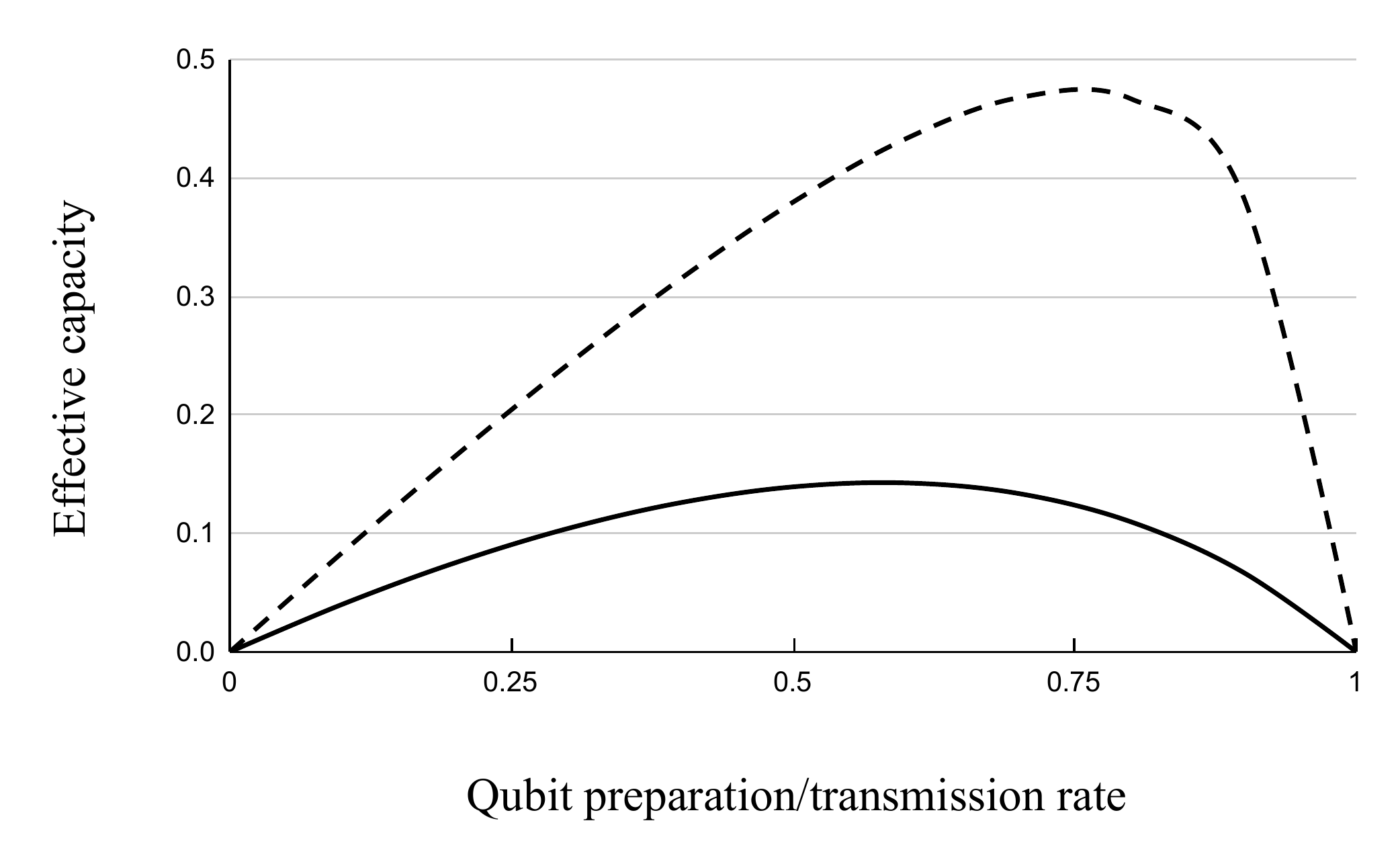}
    \caption{Capacity vs $\lambda$ for $\kappa=1$ (solid) and $\kappa=0.1$ (dotted).}
    \label{fig:LambdaVsCap}
\end{figure}

In Fig. \ref{fig:LambdaVsCap}, the capacity expression in Theorem~\ref{thm:qcAchieve} is plotted against $\lambda$ for $\kappa=1$  and $\kappa=0.1$. Intuitively, the best $\lambda$ for a given $\kappa$ is the point where the curve reaches its peak.

Clearly, the optimal $\lambda$ is not close to $\mu$ ($=1$). Moreover, for $\lambda$ close $\mu$, the capacity is almost zero. This is because very high $\lambda$ leads to large waiting times for qubits and thus results in significant decoherence. Furthermore, the optimal $\lambda$ depends on $\kappa$ and hence, on the physical parameters of the buffer. The idealized i.i.d. setting fails to capture this crucial dependence.

\subsection{Optimal queuing distributions}
\label{sec:queueOpt}
The effective capacity in the presence of buffer decoherence is a function of
the stationary distribution of waiting times. Thus, in turn, it is heavily
influenced by the time between preparation of two qubits and the time to
process (transmit and receive) a qubit. A quantitative understanding of this
dependence is useful for designing quantum communication systems. 

In this section, we take a short stride in that direction by characterizing the
optimal distributions in two queuing settings of general interest when the
channel and buffer decoherence follows the exponential model in Eq.
\ref{eq:expDecoh}.  The exponential decoherence model is physically the most
well motivated model for capturing decoherence in terms of the interaction time
with the environment. 

 
First, we obtain a simpler expression of the capacity result in
Theorem~\ref{thm:qcAchieve} for the exponential decoherence model. 
\begin{corollary}
\label{cor:capSeries}
The effective capacity in the presence of buffer decoherence is given by
\[\frac{\lambda}{\ln{2}} \sum_{k=1}^\infty \frac{1}{2k~(2k-1)}  \mathbb{E}_{W\sim \pi}\left[\exp\left(-\kappa~k~W\right)\right],\]
when $p_{\mbox{\em eff}}(W)~=~1~-~\exp(-\kappa~W)$ for some $\kappa>0$.
\end{corollary}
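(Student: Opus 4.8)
The plan is to start from the achievable-rate expression in Theorem~\ref{thm:qcAchieve}, namely $\lambda\,\mathbf{E}_\pi\!\left[1-h(q(W))\right]$ with $q(W)=\bigl(1-\sqrt{1-p_{\mathrm{eff}}(W)}\bigr)/2$, and convert the integrand $1-h(q)$ into a power series in $e^{-\kappa W}$ whose expectation can be taken term by term. First I would substitute the exponential model $p_{\mathrm{eff}}(W)=1-e^{-\kappa W}$, so that $\sqrt{1-p_{\mathrm{eff}}(W)}=e^{-\kappa W/2}=:t$, giving $q=(1-t)/2$ and $1-q=(1+t)/2$. Expanding the binary entropy via $\log_2\frac{1\pm t}{2}=\log_2(1\pm t)-1$ and cancelling the resulting constant term yields the clean identity $1-h(q)=\tfrac{1}{2}f(t)$, where $f(x)=(1+x)\log_2(1+x)+(1-x)\log_2(1-x)$ is exactly the function already introduced in Sec.~\ref{SGADCHolevo}.

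Next I would obtain the Taylor series of $f$. Since $f'(x)=\log_2\frac{1+x}{1-x}=\frac{1}{\ln 2}\ln\frac{1+x}{1-x}$ and $\ln\frac{1+x}{1-x}=2\sum_{j\ge 0}\frac{x^{2j+1}}{2j+1}$ for $|x|\le 1$, integrating termwise from $0$ (using $f(0)=0$) gives
\[
f(t)=\frac{2}{\ln 2}\sum_{k=1}^{\infty}\frac{t^{2k}}{2k(2k-1)}.
\]
Hence $1-h(q)=\tfrac12 f(t)=\frac{1}{\ln 2}\sum_{k\ge 1}\frac{t^{2k}}{2k(2k-1)}$, and because $t^{2k}=e^{-\kappa k W}$ this is a series in $e^{-\kappa k W}$ with nonnegative coefficients, valid for every $W\ge 0$. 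At the boundary $W=0$ (i.e. $t=1$) the series still converges, since $\sum_{k\ge1}\frac{1}{2k(2k-1)}=\sum_{k\ge1}\bigl(\frac{1}{2k-1}-\frac{1}{2k}\bigr)=\ln 2$, which is consistent with $1-h(0)=1$.

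Finally I would multiply by $\lambda$, take the expectation over the stationary waiting-time law $\pi$, and interchange the expectation with the infinite sum to arrive at $\frac{\lambda}{\ln 2}\sum_{k\ge1}\frac{1}{2k(2k-1)}\,\mathbf{E}_{W\sim\pi}\!\left[e^{-\kappa k W}\right]$, which is the claimed expression. The only step requiring care is this interchange; since each summand $\frac{1}{2k(2k-1)}e^{-\kappa k W}$ is nonnegative, Tonelli's theorem (equivalently monotone convergence applied to the partial sums) justifies swapping $\mathbf{E}_\pi$ and $\sum_k$ with no integrability hypothesis beyond positivity. I expect this interchange to be the main — and essentially the only — technical obstacle; the rest reduces to the routine entropy expansion and the standard logarithmic series used above.
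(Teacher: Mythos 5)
Your proposal is correct and follows essentially the same route as the paper's (deliberately brief) proof: substitute $p_{\mathrm{eff}}(W)=1-e^{-\kappa W}$ into the capacity expression of Theorem~\ref{thm:qcAchieve} and expand the resulting entropy term via the logarithmic series, then take expectations term by term. You have merely filled in the details the paper leaves as ``algebraic manipulations'' --- the identity $1-h\bigl((1-t)/2\bigr)=\tfrac{1}{2}f(t)$, the termwise integration yielding the coefficients $\tfrac{1}{2k(2k-1)}$, and the Tonelli justification for swapping $\mathbf{E}_\pi$ with the sum --- all of which check out.
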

\begin{proof}
For the exponential decoherence model, the capacity expression in Theorem~\ref{thm:qcAchieve} becomes
\[\lambda~\EX_{\pi} \left[1~-~h\left(\frac{1-\exp\left(-\frac{1}{2}\kappa~ W\right)}{2}\right)\right].\]
The rest follows using the series expansion of $\log(1+x)$ for $|x|<1$ and algebraic manipulations.
\end{proof}
Note that the expression in Cor.~\ref{cor:capSeries} is valid for any stable queue, irrespective of the queuing discipline and distributions. 
 
In the queuing literature, M/G/1 and G/M/1 are two popular classes of queuing
models. In our setting, M/G/1 is equivalent to exponentially distributed
(memoryless) preparation times and generally distributed processing or service
times of qubits. G/M/1 is equivalent to generally distributed preparation times
and exponentially distributed processing or service times. As a first step
towards optimizing queuing distributions, one may ask: what are the best
distribution for processing times and preparation times in M/G/1 and G/M/1
queues, respectively? The following theorems answer this question.

\begin{theorem}
\label{thm:MD1best}
Among all quantum communication systems  with M/G/1 buffering, symmetric GAD
    channel, and exponential decoherence, the system with deterministic
    processing or service time has the maximum effective capacity for any
    $\lambda$ and $\mu$ ($>\lambda$).
\end{theorem}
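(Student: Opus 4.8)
The plan is to reduce the optimization over service-time distributions to a pointwise minimization of the service-time Laplace transform, which Jensen's inequality then settles uniquely in favor of the deterministic law. First I would invoke Corollary~\ref{cor:capSeries}, which for the exponential decoherence model writes the effective capacity as
\[
C \;=\; \frac{\lambda}{\ln 2}\sum_{k=1}^{\infty}\frac{1}{2k(2k-1)}\,\mathbb{E}_{W\sim\pi}\!\left[e^{-\kappa k W}\right].
\]
The structural observation that drives everything is that every coefficient $\tfrac{1}{2k(2k-1)}$ is strictly positive and that $\mathbb{E}_{\pi}[e^{-\kappa k W}]=\widehat W(\kappa k)$ is exactly the Laplace--Stieltjes transform of the stationary waiting time evaluated at $s=\kappa k$. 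Hence, among all M/G/1 systems with the same $\lambda$ and the same mean service time $\mathbb{E}[S]=1/\mu$, it suffices to show that deterministic (M/D/1) service maximizes $\widehat W(s)$ \emph{simultaneously for every} $s>0$; the statement for $C$ then follows termwise from positivity of the weights, and the uniqueness of the maximizer is inherited from the uniqueness at the level of each term.

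For the second step I would bring in the Pollaczek--Khinchine transform formula for the M/G/1 waiting time. Writing $\widehat S(s)=\mathbb{E}[e^{-sS}]$ and $\rho=\lambda/\mu=\lambda\,\mathbb{E}[S]$, this gives
\[
\widehat W(s)\;=\;\frac{(1-\rho)\,s}{\,s-\lambda\bigl(1-\widehat S(s)\bigr)\,}.
\]
Since $\lambda$ and $\mu$ are held fixed, both $\rho$ and the numerator $(1-\rho)s$ are fixed, so the entire dependence on the service law enters through the single number $\widehat S(s)$. I would then confirm that the denominator $s-\lambda+\lambda\widehat S(s)$ is strictly positive for every $s>0$: from the elementary bound $\widehat S(s)\ge 1-s\,\mathbb{E}[S]$ it is at least $s(1-\rho)>0$ under the stability hypothesis $\lambda<\mu$. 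Consequently $\widehat W(s)$ is a strictly \emph{decreasing} function of $\widehat S(s)$, and maximizing $\widehat W(s)$ is exactly equivalent to minimizing $\widehat S(s)$.

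The third step is a one-line convexity argument. For each fixed $s>0$ the map $t\mapsto e^{-st}$ is strictly convex on $[0,\infty)$, so Jensen's inequality yields
\[
\widehat S(s)\;=\;\mathbb{E}\!\left[e^{-sS}\right]\;\ge\;e^{-s\,\mathbb{E}[S]}\;=\;e^{-s/\mu},
\]
with equality if and only if $S$ is almost surely the constant $1/\mu$, i.e. the service is deterministic. Thus deterministic service is the \emph{unique} minimizer of $\widehat S(s)$, uniformly in $s$; this strengthens, uniformly in $s$, the classical fact that deterministic service minimizes the mean waiting time $\lambda\,\mathbb{E}[S^2]/\bigl(2(1-\rho)\bigr)$. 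Chaining the three steps, M/D/1 maximizes $\widehat W(\kappa k)$ for every $k$, hence maximizes each positive-weighted term and therefore $C$ itself, for any $\lambda$ and any $\mu>\lambda$.

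The individual steps are short, so the part I would watch most carefully is the sign bookkeeping in the second step: one must verify that the Pollaczek--Khinchine denominator stays positive for all $s>0$, so that $\widehat W$ is genuinely monotone decreasing in $\widehat S$ rather than flipping sign, and it is precisely here that the stability assumption $\lambda<\mu$ is consumed. Everything else follows from positivity of the series coefficients together with strict convexity of $t\mapsto e^{-st}$.
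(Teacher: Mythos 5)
Your first step---reducing the optimization to maximizing $\mathbb{E}_{\pi}\left[e^{-\kappa k W}\right]$ term by term, using positivity of the series coefficients in Corollary~\ref{cor:capSeries}---is exactly the paper's first step. Where you diverge is in the second step: the paper simply cites the proof of Theorem~4 of \cite{MandayamJagannathanEA20} for the fact that deterministic service maximizes $\mathbb{E}_{\pi}\left[e^{-sW}\right]$ for every $s>0$, whereas you attempt a self-contained Pollaczek--Khinchine-plus-Jensen argument. The ambition to make this step explicit is laudable, and your denominator-positivity check and your Jensen step are both correct---but they are applied to the wrong random variable.

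Here is the concrete gap. The paper's $W$ is the \emph{sojourn time} (time in system), as one sees both from its Lindley recursion $W_{i+1}=\max(W_i-A_i,0)+S_{i+1}$ with $W_1=S_1$, and from the later remark that in the M/M/1 setting $\pi$ is exponential with mean $(1-\lambda)^{-1}$ (true of the sojourn time; the pre-service delay has an atom at zero). The transform you quoted,
\[
\widehat W(s)=\frac{(1-\rho)\,s}{s-\lambda\bigl(1-\widehat S(s)\bigr)},
\]
is the Laplace--Stieltjes transform of the \emph{delay before service} $W_q$; since $W=W_q+S$ with the two independent in FCFS M/G/1, the sojourn time carries an extra factor $\widehat S(s)$ in the numerator. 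Writing $x:=\widehat S(s)$, the dependence on the service law is then through $x/(s-\lambda+\lambda x)$, whose derivative in $x$ equals $(s-\lambda)/(s-\lambda+\lambda x)^2$. Hence $\widehat W(s)$ is decreasing in $\widehat S(s)$ only for $s<\lambda$, and is \emph{increasing} for $s>\lambda$, where Jensen pushes the other way and favors variable service. Numerically, at $\mu=1$, $\lambda=1/2$, $s=1$: deterministic service gives $\widehat W(1)=e^{-1}/(1+e^{-1})=1/(e+1)\approx 0.269$, while exponential service gives $\widehat W(1)=1/3$. So the intermediate claim your chain needs---that M/D/1 maximizes every term $\mathbb{E}_{\pi}\left[e^{-\kappa k W}\right]$---fails for the paper's $W$ whenever $\kappa k>\lambda$, and the termwise strategy cannot be rescued by the same monotonicity bookkeeping. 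As written, your proof establishes the theorem for a model in which decoherence accrues only until transmission begins; for the paper's sojourn-time $W$ you must contend with the extra $\widehat S(s)$ factor, or fall back, as the paper does, on the argument of the cited reference---your route has the merit of bringing to the surface a subtlety that the paper's proof-by-citation leaves buried.
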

\begin{proof}
Suppose there exists a service distribution for which $\mathbb{E}_{W\sim
    \pi}\left[\exp(-s W)\right]$ is more than any other service distribution
    with the same mean for any $s>0$. Then, from the capacity expression in
    Corollary~\ref{cor:capSeries}, it is clear that under that particular
    distribution, each term in the series will be more the corresponding term
    for any other distribution. Hence, that distribution will achieve the
    maximum capacity among the class of all service distributions with the same
    mean.

Thus, to complete this proof, we need only to show that for exponentially
    distributed preparation times, the deterministic service time maximizes
    $\mathbb{E}_{W\sim \pi}\left[\exp(-s W)\right]$ for any $s>0$. This follows
    directly from the proof of Theorem~4 in \cite{MandayamJagannathanEA20}.
\end{proof}

\begin{theorem}
\label{thm:DM1best}
Among all quantum communication systems  with G/M/1 buffering, symmetric GAD
    channel, and exponential decoherence, the system with deterministic
    preparation/arrival time has the maximum effective capacity for any
    $\lambda$ and $\mu$ ($>\lambda$).
\end{theorem}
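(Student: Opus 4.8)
The plan is to reduce the statement, exactly as in the proof of Theorem~\ref{thm:MD1best}, to a claim about the stationary sojourn-time law $\pi$ of a G/M/1 queue, and then to settle that claim with a Jensen/convexity argument. By Corollary~\ref{cor:capSeries} the effective capacity under exponential decoherence equals
\[
\frac{\lambda}{\ln 2}\sum_{k=1}^\infty \frac{1}{2k(2k-1)}\,\mathbb{E}_{W\sim\pi}\!\left[\exp(-\kappa k W)\right],
\]
a series with strictly positive coefficients in the Laplace transforms $\mathbb{E}_{W\sim\pi}[\exp(-sW)]$ evaluated at $s=\kappa k>0$. Hence it suffices to produce a single interarrival distribution that maximizes $\mathbb{E}_{W\sim\pi}[\exp(-sW)]$ \emph{simultaneously} for every $s>0$, over all preparation-time laws with the prescribed mean $1/\lambda$; such a distribution then dominates every term of the series and therefore the capacity. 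This term-by-term reduction is identical to the one used for the M/G/1 case.

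Next I would exploit the special structure of the G/M/1 queue. Recall that with interarrival Laplace–Stieltjes transform $A^{*}(\theta)=\mathbb{E}[\exp(-\theta A)]$ and exponential service at rate $\mu$, the stationary sojourn time $W$ governed by the recursion $W_{i+1}=\max(W_i-A_i,0)+S_{i+1}$ is itself exponentially distributed with rate $\theta^{\star}:=\mu(1-\sigma)$, where $\sigma\in(0,1)$ is the unique root of $\sigma=A^{*}(\mu(1-\sigma))$; equivalently $\theta^{\star}\in(0,\mu)$ is the unique positive root of $A^{*}(\theta)=1-\theta/\mu$. I would justify this by the standard facts that an arriving qubit sees $n$ qubits in system with geometric probability $(1-\sigma)\sigma^{n}$, so that $W$ is a geometric sum of $\mathrm{Exp}(\mu)$ services; summing the geometric series gives $\mathbb{E}_{W\sim\pi}[\exp(-sW)]=\theta^{\star}/(\theta^{\star}+s)$. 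For each fixed $s>0$ this is strictly increasing in $\theta^{\star}$, so maximizing all the Laplace transforms is equivalent to maximizing the single scalar $\theta^{\star}$.

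It then remains to show that $\theta^{\star}$ is maximized by deterministic arrivals. For any interarrival law with mean $1/\lambda$, convexity of $\theta\mapsto\exp(-\theta a)$ and Jensen's inequality give $A^{*}(\theta)\ge \exp(-\theta/\lambda)=:A^{*}_{D}(\theta)$ for all $\theta\ge 0$, with equality precisely when $A\equiv 1/\lambda$. Set $F(\theta)=A^{*}(\theta)-1+\theta/\mu$; it is convex (an LST plus a linear term), with $F(0)=0$, $F'(0)=1/\mu-1/\lambda<0$ since $\lambda<\mu$, and $F(\mu)=A^{*}(\mu)>0$, so $F$ has a unique positive root $\theta^{\star}\in(0,\mu)$ and is strictly negative on $(0,\theta^{\star})$. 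Writing $F_{D}$ for the deterministic analogue with positive root $\theta^{\star}_{D}$, the pointwise bound $F\ge F_{D}$ yields $F(\theta^{\star}_{D})\ge F_{D}(\theta^{\star}_{D})=0$; since $F<0$ on $(0,\theta^{\star})$, the positive point $\theta^{\star}_{D}$ cannot lie in that interval, forcing $\theta^{\star}_{D}\ge\theta^{\star}$. Thus deterministic preparation times maximize $\theta^{\star}$, hence every $\mathbb{E}_{W\sim\pi}[\exp(-sW)]$, and hence the capacity.

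The only nonroutine ingredient is the exact closed form of the G/M/1 stationary sojourn time together with its defining root equation; once these are in hand the remainder is the same Jensen/convexity comparison that drives Theorem~\ref{thm:MD1best}. I expect the main obstacle to be carrying out the root comparison cleanly, in particular confirming that $\theta^{\star}$ is the \emph{unique} positive root and that the inequality points the right way, namely $\theta^{\star}_{D}\ge\theta^{\star}$ rather than the reverse. This is exactly where the sign of $F'(0)$ (which uses $\lambda<\mu$) and the convexity of $F$ must be invoked with care, and it is the one place a careless argument could flip the optimum.
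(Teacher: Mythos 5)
Your proposal is correct and follows essentially the same route as the paper's proof: the identical term-by-term reduction via Corollary~\ref{cor:capSeries} to maximizing $\mathbb{E}_{W\sim\pi}[\exp(-sW)]$ for every $s>0$, and the same use of the fact that the stationary $G/M/1$ sojourn time is exponential with rate $\mu(1-\sigma)$, so that maximizing all the Laplace transforms reduces to minimizing the fixed-point root $\sigma$ (equivalently maximizing $\theta^{\star}=\mu(1-\sigma)$) --- precisely the content of Lemmas~\ref{lem:smallestSigmaBest} and~\ref{lem:smallestSigmaDet}. The only difference is that where the paper disposes of Lemma~\ref{lem:smallestSigmaDet} by citing Proposition~2 of \cite{ChatterjeeSeoEA17}, you prove it explicitly via the standard Jensen/convexity root comparison, and your version is sound, including the two points that need care: uniqueness of the positive root of $F(\theta)=A^{*}(\theta)-1+\theta/\mu$ (from convexity, $F(0)=0$, $F'(0)=1/\mu-1/\lambda<0$, $F(\mu)>0$) and the direction of the conclusion $\theta^{\star}_{D}\ge\theta^{\star}$.
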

\begin{proof}
Using the argument in the proof of Theorem~\ref{thm:MD1best}, it is sufficient
    to show that for exponentially distributed service times, the deterministic
    preparation/arrival time maximizes $\mathbb{E}_{W\sim \pi}\left[\exp(-s
    W)\right]$ for any $s>0$.

The following two lemmas complete the proof of this theorem.

\begin{lemma}
\label{lem:smallestSigmaBest}
Among all arrival/preparation distributions with mean $\lambda^{-1}$ (>$\mu^{-1}$), $\mathbb{E}_{W\sim \pi}\left[\exp(-s W)\right]$ for any $s>0$ is maximized by that arrival/preparation distribution for which the solution to the G/M/1 fixed point equation
\[\sigma = \mathbb{E}_{A}\left[\exp\left(-(\mu-\mu~\sigma)~A\right) \right] \]
is the smallest. 
\end{lemma}

\begin{lemma}
\label{lem:smallestSigmaDet}
Among all arrival/preparation distributions with mean $\lambda^{-1}$ (>$\mu^{-1}$), the solution to the G/M/1 fixed point equation 
\[\sigma = \mathbb{E}_{A}\left[\exp\left(-(\mu-\mu~\sigma)~A\right) \right] \]
is the smallest for the deterministic arrival/preparation time $\lambda^{-1}$.
\end{lemma}
\end{proof}

\begin{proof}[Proof of Lemma~\ref{lem:smallestSigmaBest}]
The waiting time in a G/M/1 queue is exponentially distributed with mean $\frac{1}{\mu(1-\sigma)}$, where 
$\sigma$ is the solution to the fixed point equation
\[\sigma = \mathbb{E}_{A}\left[\exp\left(-(\mu-\mu~\sigma)~A\right) \right].\]
For exponentially distributed $W$, $\mathbb{E}\left[\exp(-s W)\right]$ decreases with $\mathbb{E}[W]$. Hence, for a given $\mu$, $\mathbb{E}\left[\exp(-s W)\right]$ increases as $\sigma$ decreases, which, in turn, implies Lemma~\ref{lem:smallestSigmaBest}.
\end{proof}

Proof of Lemma~\ref{lem:smallestSigmaDet} is similar to the proof of Proposition~2 in \cite{ChatterjeeSeoEA17}.

\section{Discussion and Conclusion}
\label{sec:cncl}

\xb
\outl{Capacity fundamental but unresolved, Sec. 3:GADC $\AC_{p,n}$, for $n=1/2$
Theorem~1, no entanglement required to achieve capacity $\chi$. Additivity
means no entanglement in encoding, but decoding unknown for most additive
channels. Possibility that when additive $\chi^{(1)}(\AC_{p,n})$ no
entanglement required and $C_{Shan} = \chi$ or entanglement required and
$C_{Shan} < \chi$ also open. Additivity itself is open. Solving open i.i.d.
problems may give insights for non-i.i.d..}
\xa

Understanding the classical capacity of a quantum channel and the means by
which it can be achieved are fundamental, long-standing open issues in
quantum information.  In Sec.~\ref{SAmpDamp}, we studied these issues for the
generalized amplitude damping channel~(GADC) $\AC_{p,n}$, whose two parameters
$p$ and $n$ represent the amount of damping and mixing, respectively. In
Theorem~\ref{thm:CapEqInd}, we found that for all $p$ and $n=1/2$, the Shannon
capacity $C_{\rm Shan}$ of the (GADC) equals both its product state capacity
$\chi^{(1)}$ and classical capacity $\chi$.
In general, entanglement is required to achieve a channel's classical capacity
$\chi$. Interestingly, and of practical importance, our result implies that for the $n=1/2$ GADC, this
capacity can be achieved without using any entanglement for encoding or
decoding classical information into the channel.
Entangled encoding is not required when $\chi^{(1)}$ is additive.  Additivity
is known to hold for some values of $p,n$ of the GADC, and also known for a
variety of other channels. 

Except for the $n=1/2$ GADC solved here, for most channels with additive
$\chi^{(1)}$, finding the precise decoding that achieves capacity and
certifying whether the decoding requires entanglement or not, remain open
problems.
For our solution of the $n=1/2$ GADC, in Sec.~\ref{SsubICAmpGln}, we constructed
several families of product encoding and decoding, including one that achieves
capacity.
Our solution opens an interesting possibility. Using a product encoding and
product decoding more advanced than the ones discussed in
Sec.~\ref{SsubICAmpGln} it may be possible to find an induced channel $M_2'$,
with capacity $C(M_2') = C_{\rm Shan}(\AC_{p,n}) = \chi^{(1)}(\AC_{p,n})$ for
all $p,n$ where $\chi^{(1)}$ is additive. On the other hand, it could happen
that even when $\chi^{(1)}(\AC_{p,n})$ is additive and no entanglement is
required at the encoder, one requires entanglement at the decoder and $C_{\rm
Shan}(\AC_{p,n}) < \chi^{(1)}(\AC_{p,n})$.  Which of these possibilities is
true remains an open-problem.
To completely resolve these open problems, one would first need to find all
$p,n$ where $\chi^{(1)}(\AC_{p,n})$ is additive, a problem that also remains
open.

Insights obtained from pursuing these open problems have the potential to not
only enrich the i.i.d setting with point-to-point quantum channels but also
provide a path to study non-i.i.d queue channel settings that arise in quantum
networks. Another challenging avenue for future work is to characterise the
queue channel capacity when the underlying noise model is not additive, as
could be the case for certain parameter ranges of the GADC. This may require a
fundamentally new approach to study quantum communication networks.

\section*{Acknowledgments}
VS gratefully acknowledges support from NSF CAREER Award CCF 1652560 and NSF
grant PHY 1915407. 
The work of AC was supported by the Department of Science and
Technology, Government of India under Grant SERB/SRG/2019/001809 and Grant
INSPIRE/04/2016/001171.
PM and KJ acknowledge the Metro Area Quantum Access Network (MAQAN) project,
supported by the Ministry of Electronics and Information Technology, India vide
sanction number 13(33)/2020-CC\&BT.


\begin{thebibliography}{10}
\expandafter\ifx\csname url\endcsname\relax
  \def\url#1{\texttt{#1}}\fi
\expandafter\ifx\csname doi\endcsname\relax
  \def\doi#1{\burlalt{doi:#1}{http://dx.doi.org/#1}}\fi
\expandafter\ifx\csname urlprefix\endcsname\relax\def\urlprefix{}\fi
\expandafter\ifx\csname href\endcsname\relax
  \def\href#1#2{#2}\fi
\expandafter\ifx\csname burlalt\endcsname\relax
  \def\burlalt#1#2{\href{#2}{#1}}\fi

\bibitem{wehner2020}
Wojciech Kozlowski, Axel Dahlberg, and Stephanie Wehner.
\newblock Designing a quantum network protocol.
\newblock In {\em Proceedings of the 16th International Conference on Emerging
  Networking EXperiments and Technologies}, CoNEXT '20, page 1–16, New York,
  NY, USA, 2020. Association for Computing Machinery.
\newblock \doi{10.1145/3386367.3431293}.

\bibitem{nemoto2016}
Kae Nemoto, Michael Trupke, Simon~J Devitt, Burkhard Scharfenberger, Kathrin
  Buczak, J{\"o}rg Schmiedmayer, and William~J Munro.
\newblock Photonic quantum networks formed from nv- centers.
\newblock {\em Scientific reports}, 6(1):1--12, 2016.

\bibitem{repeater2018}
F~Rozp{\k{e}}dek, K~Goodenough, J~Ribeiro, N~Kalb, V~Caprara Vivoli,
  A~Reiserer, R~Hanson, S~Wehner, and D~Elkouss.
\newblock Parameter regimes for a single sequential quantum repeater.
\newblock {\em Quantum Science and Technology}, 3(3):034002, apr 2018.
\newblock \doi{10.1088/2058-9565/aab31b}.

\bibitem{repeater_waitingtime20}
E.~Shchukin, F.~Schmidt, and P.~van Loock.
\newblock Waiting time in quantum repeaters with probabilistic entanglement
  swapping.
\newblock {\em Phys. Rev. A}, 100:032322, Sep 2019.
\newblock \doi{10.1103/PhysRevA.100.032322}.

\bibitem{MandayamJagannathanEA20}
P.~{Mandayam}, K.~{Jagannathan}, and A.~{Chatterjee}.
\newblock The classical capacity of additive quantum queue-channels.
\newblock {\em IEEE Journal on Selected Areas in Information Theory},
  1(2):432--444, Aug 2020.
\newblock \doi{10.1109/JSAIT.2020.3015055}.

\bibitem{shapiro2009}
Jeffrey~H Shapiro.
\newblock The quantum theory of optical communications.
\newblock {\em IEEE Journal of Selected Topics in Quantum Electronics},
  15(6):1547--1569, 2009.

\bibitem{KhatriSharmaEA20}
Sumeet Khatri, Kunal Sharma, and Mark~M. Wilde.
\newblock Information-theoretic aspects of the generalized amplitude-damping
  channel.
\newblock {\em Phys. Rev. A}, 102:012401, Jul 2020.
\newblock \doi{10.1103/PhysRevA.102.012401}.

\bibitem{YuenShapiro78}
H.~Yuen and J.~Shapiro.
\newblock Optical communication with two-photon coherent states--part i:
  Quantum-state propagation and quantum-noise.
\newblock {\em IEEE Transactions on Information Theory}, 24(6):657--668, 1978.
\newblock \doi{10.1109/TIT.1978.1055958}.

\bibitem{Shapiro09}
Jeffrey~H. Shapiro.
\newblock The quantum theory of optical communications.
\newblock {\em IEEE Journal of Selected Topics in Quantum Electronics},
  15(6):1547--1569, 2009.
\newblock \doi{10.1109/JSTQE.2009.2024959}.

\bibitem{ZouLiEA17}
Wen-Jie Zou, Yu-Huai Li, Shu-Chao Wang, Yuan Cao, Ji-Gang Ren, Juan Yin,
  Cheng-Zhi Peng, Xiang-Bin Wang, and Jian-Wei Pan.
\newblock Protecting entanglement from finite-temperature thermal noise via
  weak measurement and quantum measurement reversal.
\newblock {\em Phys. Rev. A}, 95:042342, Apr 2017.
\newblock \doi{10.1103/PhysRevA.95.042342}.

\bibitem{RozpedekGoodenoughEA18}
F~Rozp{\k{e}}dek, K~Goodenough, J~Ribeiro, N~Kalb, V~Caprara Vivoli,
  A~Reiserer, R~Hanson, S~Wehner, and D~Elkouss.
\newblock Parameter regimes for a single sequential quantum repeater.
\newblock {\em Quantum Science and Technology}, 3(3):034002, Apr 2018.
\newblock \doi{10.1088/2058-9565/aab31b}.

\bibitem{ChuangNielsen97}
Isaac L.Chuang and M.~A. Nielsen.
\newblock Prescription for experimental determination of the dynamics of a
  quantum black box.
\newblock {\em Journal of Modern Optics}, 44(11-12):2455--2467, 1997.
\newblock \doi{10.1080/09500349708231894}.

\bibitem{MyattKingEA00}
C.~J. Myatt, B.~E. King, Q.~A. Turchette, C.~A. Sackett, D.~Kielpinski, W.~M.
  Itano, C.~Monroe, and D.~J. Wineland.
\newblock Decoherence of quantum superpositions through coupling to engineered
  reservoirs.
\newblock {\em Nature}, 403(6767):269--273, Jan 2000.
\newblock \doi{10.1038/35002001}.

\bibitem{TurchetteMyattEA00}
Q.~A. Turchette, C.~J. Myatt, B.~E. King, C.~A. Sackett, D.~Kielpinski, W.~M.
  Itano, C.~Monroe, and D.~J. Wineland.
\newblock Decoherence and decay of motional quantum states of a trapped atom
  coupled to engineered reservoirs.
\newblock {\em Phys. Rev. A}, 62:053807, Oct 2000.
\newblock \doi{10.1103/PhysRevA.62.053807}.

\bibitem{ChirolliBurkard08}
Luca Chirolli and Guido Burkard.
\newblock Decoherence in solid-state qubits.
\newblock {\em Advances in Physics}, 57(3):225--285, 2008.
\newblock \doi{10.1080/00018730802218067}.

\bibitem{LiZhenMaoFa07}
Hou Li-Zhen and Fang Mao-Fa.
\newblock The holevo capacity of a generalized amplitude-damping channel.
\newblock {\em Chinese Physics}, 16(7):1843--1847, jul 2007.
\newblock \doi{10.1088/1009-1963/16/7/006}.

\bibitem{Cortese02}
John {Cortese}.
\newblock Relative entropy and single qubit holevo-schumacher-westmoreland
  channel capacity.
\newblock {\em arXiv e-prints}, pages quant--ph/0207128, July 2002,
  \burlalt{quant-ph/0207128}{http://arxiv.org/abs/quant-ph/0207128}.

\bibitem{Berry05}
Dominic~W. Berry.
\newblock Qubit channels that achieve capacity with two states.
\newblock {\em Phys. Rev. A}, 71:032334, Mar 2005.
\newblock \doi{10.1103/PhysRevA.71.032334}.

\bibitem{Wilde17}
Mark~M. Wilde.
\newblock {\em Quantum Information Theory}.
\newblock Cambridge University Press, 2 edition, 2017.
\newblock \doi{10.1017/9781316809976}.

\bibitem{Holevo12}
A.~S. Holevo.
\newblock {\em Index}, pages 346--350.
\newblock De Gruyter, 2012.
\newblock \doi{doi:10.1515/9783110273403.346}.

\bibitem{RMP2014}
Filippo Caruso, Vittorio Giovannetti, Cosmo Lupo, and Stefano Mancini.
\newblock Quantum channels and memory effects.
\newblock {\em Reviews of Modern Physics}, 86(4):1203, 2014.

\bibitem{HayashiNagaoka03}
Masahito Hayashi and Hiroshi Nagaoka.
\newblock General formulas for capacity of classical-quantum channels.
\newblock {\em IEEE Transactions on Information Theory}, 49(7):1753--1768,
  2003.

\bibitem{qubits_ncc2019}
Krishna Jagannathan, Avhishek Chatterjee, and Prabha Mandayam.
\newblock Qubits through queues: The capacity of channels with waiting time
  dependent errors.
\newblock In {\em 2019 National Conference on Communications (NCC)}, pages
  1--6. IEEE, 2019.

\bibitem{ChatterjeeSeoEA17}
Avhishek Chatterjee, Daewon Seo, and Lav~R Varshney.
\newblock Capacity of systems with queue-length dependent service quality.
\newblock {\em IEEE Transactions on Information Theory}, 63(6):3950--3963,
  2017.

\bibitem{qdelay_2020}
Wenhan Dai, Tianyi Peng, and Moe~Z Win.
\newblock Quantum queuing delay.
\newblock {\em IEEE Journal on Selected Areas in Communications},
  38(3):605--618, 2020.

\bibitem{EntRouting_2019}
Mihir Pant, Hari Krovi, Don Towsley, Leandros Tassiulas, Liang Jiang, Prithwish
  Basu, Dirk Englund, and Saikat Guha.
\newblock Routing entanglement in the quantum internet.
\newblock {\em npj Quantum Information}, 5(1):1--9, 2019.

\bibitem{guha2021}
Gayane Vardoyan, Saikat Guha, Philippe Nain, and Don Towsley.
\newblock On the stochastic analysis of a quantum entanglement distribution
  switch.
\newblock {\em IEEE Transactions on Quantum Engineering}, 2:1--16, 2021.
\newblock \doi{10.1109/TQE.2021.3058058}.

\bibitem{Arimoto72}
S.~Arimoto.
\newblock An algorithm for computing the capacity of arbitrary discrete
  memoryless channels.
\newblock {\em IEEE Transactions on Information Theory}, 18(1):14--20, 1972.
\newblock \doi{10.1109/TIT.1972.1054753}.

\bibitem{Blahut72}
R.~Blahut.
\newblock Computation of channel capacity and rate-distortion functions.
\newblock {\em IEEE Transactions on Information Theory}, 18(4):460--473, 1972.
\newblock \doi{10.1109/TIT.1972.1054855}.

\bibitem{Holevo98}
A.~S. {Holevo}.
\newblock The capacity of the quantum channel with general signal states.
\newblock {\em IEEE Transactions on Information Theory}, 44(1):269--273, Jan
  1998.
\newblock \doi{10.1109/18.651037}.

\bibitem{SchumacherWestmoreland97}
Benjamin Schumacher and Michael~D. Westmoreland.
\newblock Sending classical information via noisy quantum channels.
\newblock {\em Phys. Rev. A}, 56:131--138, Jul 1997.
\newblock \doi{10.1103/PhysRevA.56.131}.

\bibitem{SasakiBarnettEA99}
Masahide Sasaki, Stephen~M. Barnett, Richard Jozsa, Masao Osaki, and Osamu
  Hirota.
\newblock Accessible information and optimal strategies for real symmetrical
  quantum sources.
\newblock {\em Phys. Rev. A}, 59:3325--3335, May 1999.
\newblock \doi{10.1103/PhysRevA.59.3325}.

\bibitem{Shor04}
P.~W. Shor.
\newblock The adaptive classical capacity of a quantum channel, or information
  capacities of three symmetric pure states in three dimensions.
\newblock {\em IBM Journal of Research and Development}, 48(1):115--137, 2004.
\newblock \doi{10.1147/rd.481.0115}.

\bibitem{BennettShor98}
C.~H. Bennett and P.~W. Shor.
\newblock Quantum information theory.
\newblock {\em IEEE Transactions on Information Theory}, 44(6):2724--2742, Oct
  1998.
\newblock \doi{10.1109/18.720553}.

\bibitem{Hastings09}
M.~B. Hastings.
\newblock Superadditivity of communication capacity using entangled inputs.
\newblock {\em Nat Phys}, 5(4):255--257, Apr 2009.
\newblock \doi{10.1038/nphys1224}.

\bibitem{King02}
Christopher King.
\newblock Additivity for unital qubit channels.
\newblock {\em Journal of Mathematical Physics}, 43(10):4641--4653, 2002.
\newblock \doi{10.1063/1.1500791}.

\bibitem{King03}
C.~King.
\newblock The capacity of the quantum depolarizing channel.
\newblock {\em IEEE Transactions on Information Theory}, 49(1):221--229, Jan
  2003.
\newblock \doi{10.1109/TIT.2002.806153}.

\bibitem{KingMatsumotoEA07}
Christopher King, Keiji Matsumoto, Michael Nathanson, and Mary~Beth Ruskai.
\newblock Properties of conjugate channels with applications to additivity and
  multiplicativity.
\newblock {\em Markov Process and Related Fields}, 13:391--423, 2007.
\newblock \urlprefix\url{http://math-mprf.org/journal/articles/id1123/}.

\bibitem{Shor02}
Peter~W. Shor.
\newblock Additivity of the classical capacity of entanglement-breaking quantum
  channels.
\newblock {\em Journal of Mathematical Physics}, 43(9):4334--4340, 2002.
\newblock \doi{10.1063/1.1498000}.

\bibitem{Han03}
T.~S. {Han}.
\newblock {\em Information-spectrum Methods in Information Theory}.
\newblock Springer-Verlag Berlin Heidelberg, 2003.

\bibitem{VerduHan94}
Sergio Verd{\'u} and T.~S. Han.
\newblock A general formula for channel capacity.
\newblock {\em IEEE Transactions on Information Theory}, 40(4):1147--1157,
  1994.

\bibitem{Helstrom69}
Carl~W. Helstrom.
\newblock Quantum detection and estimation theory.
\newblock {\em Journal of Statistical Physics}, 1(2):231--252, Jun 1969.
\newblock \doi{10.1007/BF01007479}.

\bibitem{NielsenChuang11}
Michael~A. Nielsen and Isaac~L. Chuang.
\newblock {\em Quantum Computation and Quantum Information: 10th Anniversary
  Edition}.
\newblock Cambridge University Press, New York, NY, USA, 10th edition, 2011.

\end{thebibliography}
\end{document}